\newcolumntype{C}[1]{>{\centering\arraybackslash}p{#1}}
\tikzstyle{legendborder}=[rectangle, draw, black, rounded corners, thin, top color=white, text=black, minimum width=2.5cm, text width=4.5cm]
\tikzstyle{legendnoborder}=[rectangle, draw, white, rounded corners, thin, top color=white, text=black, minimum width=2.5cm]
\tikzstyle{selected edge} = [draw,line width=1pt,black]
\tikzstyle{v}=[circle,fill=black,draw=black!75,inner sep=0pt,minimum size=0.3em]
\tikzstyle{I}=[circle,draw=black!75,inner sep=0pt,minimum size=0.8em]
\tikzstyle{J}=[rectangle,draw=black!75,inner sep=0pt,minimum size=0.7em]
\newcommand{\mc}{\mathcal}
\newcommand{\Oh}{\mc{O}}
\newcommand{\PP}{\textnormal{\texttt{\textbf{P}}}}
\newcommand{\NP}{\textnormal{\texttt{\textbf{NP}}}}
\newcommand{\PSPACE}{\textnormal{\texttt{\textbf{PSPACE}}}}
\newcommand{\FPT}{\textnormal{\texttt{\textbf{FPT}}}}
\newcommand{\WW}{\textnormal{\texttt{\textbf{W}}}}
\newcommand{\WONE}{\textnormal{\texttt{\textbf{W[1]}}}}
\newcommand{\WTWO}{\textnormal{\texttt{\textbf{W[2]}}}}
\newcommand{\WT}{\textnormal{\texttt{\textbf{W[t]}}}}
\newcommand{\WTT}{\textnormal{\texttt{\textbf{W[t + 1]}}}}
\newcommand{\XP}{\textnormal{\texttt{\textbf{XP}}}}
\newcommand{\sousol}{S_s}
\newcommand{\tarsol}{S_t}
\newcommand\xleftrightarrow[2][]{%
  \ext@arrow 9999{\longleftrightarrowfill@}{#1}{#2}}
\newcommand\longleftrightarrowfill@{%
  \arrowfill@\leftarrow\relbar\rightarrow}
\begin{document}
\title{Reconfiguration on sparse graphs}
\author
{
    Daniel Lokshtanov\inst{1}
    \and Amer E. Mouawad\inst{2}
    \and Fahad Panolan\inst{3}
    \and M.S. Ramanujan\inst{1}
    \and Saket Saurabh\inst{3}
}
\institute
{
    University of Bergen, Norway.\\
    \email{daniello,ramanujan.sridharan@ii.uib.no}\\
    \and
    David R. Cheriton School of Computer Science\\
    University of Waterloo, Ontario, Canada.\\
    \email{aabdomou@uwaterloo.ca}
    \and
    Institute of Mathematical Sciences\\
    Chennai, India.\\
    \email{fahad,saket@imsc.res.in}\\
}
\maketitle
\sloppy

\begin{abstract}
A vertex-subset graph problem $\mathcal{Q}$ defines which subsets of
the vertices of an input graph are feasible solutions. A
reconfiguration variant of a vertex-subset problem asks, given
two feasible solutions $\sousol$ and $\tarsol$ of size $k$,
whether it is possible to transform $\sousol$ into $\tarsol$
by a sequence of vertex additions and deletions such that each intermediate
set is also a feasible solution of size bounded by $k$.
We study reconfiguration variants of two classical vertex-subset problems,
namely {\sc Independent Set} and {\sc Dominating Set}.
We denote the former by $\textsc{ISR}$ and the latter by $\textsc{DSR}$.
Both $\textsc{ISR}$ and $\textsc{DSR}$ are \PSPACE-complete on graphs of bounded bandwidth
and \WONE-hard parameterized by $k$ on general graphs.
We show that $\textsc{ISR}$ is fixed-parameter tractable
parameterized by $k$ when the input
graph is of bounded degeneracy or nowhere-dense.
As a corollary, we answer positively an open question concerning the parameterized complexity
of the problem on graphs of bounded treewidth.
Moreover, our techniques generalize recent results showing that $\textsc{ISR}$
is fixed-parameter tractable on planar graphs and graphs of bounded degree.
For $\textsc{DSR}$, we show the problem fixed-parameter
tractable parameterized by $k$ when the input graph does not
contain large bicliques, a class of graphs
which includes graphs of bounded degeneracy and nowhere-dense graphs.
\end{abstract}

\section{Introduction}
Given an $n$-vertex graph $G$ and two vertices $s$ and $t$ in $G$,
determining whether there exists a path and computing the length of the shortest
path between $s$ and $t$ are two of the most fundamental graph problems.
In the classical battle of \PP\ versus \NP\ or ``easy'' versus ``hard'', both
of these problems are on the easy side. That is, they can be solved in $poly(n)$ time,
where $poly$ is any polynomial function.
But what if our input consisted of a $2^n$-vertex graph?
Of course, we can no longer assume $G$ to be part of the input, as reading the input
alone requires more than $poly(n)$ time. Instead, we are given an oracle
encoded using $poly(n)$ bits and
that can, in constant or $poly(n)$ time, answer queries
of the form ``is $u$ a vertex in $G$'' or ``is there an edge between $u$ and $v$?''.
Given such an oracle and two vertices of the $2^n$-vertex graph,
can we still determine if there is a path or compute the length of
the shortest path between $s$ and $t$ in $poly(n)$ time?

A slightly different, but equally insightful, formulation of the question above is as follows.
Given a set $S$ of $n$ objects, consider the graph $R(S)$ which
contains one node for each set in the power set of $S$, $2^S$, and two nodes
are adjacent in $R(S)$ whenever the size of their symmetric difference
is equal to one. Clearly, this graph contains $2^n$ nodes and can
be easily encoded in $poly(n)$ bits using the oracle described above.
It is not hard to see that there exists a path between
any two nodes of $R(S)$. Moreover,
computing the length of a shortest path can be accomplished in constant time;
it is equal to the size of the symmetric difference of the two underlying sets.
If the node set of $R(S)$ were instead restricted to a subset of $2^S$,
both of our problems can become \NP-complete or even \PSPACE-complete.
Therefore, another interesting question is whether we can determine
what types of ``restriction'' on the node set of $R(S)$
induce such variations in the complexity of the two problems.

These two seemingly artificial questions are in fact quite natural and
appear in many practical and theoretical problems.
In particular, these are exactly the types of questions
asked under the reconfiguration framework, the main subject of this work.
Under the reconfiguration framework, instead of finding a feasible solution to some instance
$\mathcal{I}$ of a search problem $\mathcal{Q}$, we are interested in structural
and algorithmic questions related to the solution space of $\mathcal{Q}$.
Naturally, given some adjacency relation $\mathcal{A}$ defined
over feasible solutions of $\mathcal{Q}$, size of the symmetric difference being one such relation,
the solution space can be represented using a graph $R_{\mathcal{Q}}(\mathcal{I})$.
$R_{\mathcal{Q}}(\mathcal{I})$ contains one node for each feasible solution
of $\mathcal{Q}$ on instance $\mathcal{I}$ and two nodes share an edge whenever their corresponding
solutions are adjacent under $\mathcal{A}$. An edge in $R_{\mathcal{Q}}(\mathcal{I})$ corresponds
to a {\em reconfiguration step}, a walk in $R_{\mathcal{Q}}(\mathcal{I})$ is a sequence of
such steps, a {\em reconfiguration sequence}, and $R_{\mathcal{Q}}(\mathcal{I})$ is
a {\em reconfiguration graph}.

Studying problems related to reconfiguration graphs
has received considerable attention in
recent literature~\cite{B12,GKMP09,IDHPSUU11,IKD12,KMM11,MNRSS13}, the most popular problem being to
determine whether there exists a reconfiguration sequence
between two given feasible solution.
In most cases, this problem was shown~\PSPACE-hard in general, although
some polynomial-time solvable restricted cases have been identified.
For \PSPACE-hard cases, it is not surprising that shortest
paths between solutions can have exponential length.
More surprising is that for most known polynomial-time solvable cases the
diameter of the reconfiguration graph has been shown to be polynomial.
Some of the problems that have been studied under the reconfiguration framework
include {\sc Independent Set}~\cite{KMM12}, {\sc Vertex Cover}~\cite{MNR14},
{\sc Shortest Path}~\cite{Bon12FSTTCS,KMM11}, {\sc Coloring}~\cite{BB13,BC09,BMNR14,Cer07,CHJ09,CHJ11,JKKPP14},
and {\sc Boolean Satisfiability}~\cite{GKMP09}.
We refer the reader to the recent
survey by Van den Heuvel~\cite{H13} for a detailed overview.
Recently, a systematic study of the parameterized complexity of
reconfiguration problems was initiated by Mouawad et al.~\cite{MNRSS13};
various problems were identified where
the problem was not only \NP-hard (or \PSPACE-hard), but also \WW-hard under
various parameterizations.

\paragraph{Overview of our results.}
In this work, we focus on reconfiguration variants of the {\sc Independent Set (IS)}
and {\sc Dominating Set (DS)} problems.
Given two independent sets $I_s$ and $I_t$ of a graph $G$ such that
$|I_s| = |I_t| = k$, the {\sc Independent Set Reconfiguration} problem asks whether there exists
a sequence of independents sets $\sigma = \langle I_0, I_1, \ldots, I_\ell \rangle$, for some $\ell$, such that:
\begin{itemize}
\item[(1)] $I_0 = I_s$ and $I_\ell = I_t$,
\item[(2)] $I_i$ is an independent set of $G$ for all $0 \leq i \leq \ell$,
\item[(3)] $|I_i \Delta I_{i+1}| = 1$ for all $0 \leq i < \ell$, and
\item[(4)] $k - 1 \leq |S_i| \leq k$ for all $0 \leq i \leq \ell$.
\end{itemize}
Alternatively, given a graph $G$ and integer $k$, the reconfiguration
graph $R_{{\textsc{is}}}(G,k-1,k)$ has a node for each independent set of $G$
of size $k$ or $k - 1$ and two nodes are adjacent in $R_{{\textsc{is}}}(G,k - 1,k)$
whenever the corresponding independent sets can be obtained from one another
by either the addition or the deletion of a single vertex.
The reconfiguration graph $R_{{\textsc{ds}}}(G,k,k+1)$ is defined
similarly for dominating sets.
Hence, {\sc ISR} and {\sc DSR} can be formally stated as follows:

\vspace{8pt}
\noindent
\begin{tabular}{ll}
\multicolumn{2}{l}{{\sc Independent Set Reconfiguration (ISR)}}\\
{\bf Input}: & Graph $G$, positive integer $k$, and two $k$-independent sets $I_s$ and $I_t$\\
{\bf Question}: & Is there a path from $I_s$ to $I_t$ in $R_{{\textsc{is}}}(G, k-1, k)$?
\end{tabular}
\vspace{8pt}

\vspace{8pt}
\noindent
\begin{tabular}{ll}
\multicolumn{2}{l}{{\sc Dominating Set Reconfiguration (DSR)}}\\
{\bf Input}: & Graph $G$, positive integer $k$, and two $k$-dominating sets $D_s$ and $D_t$\\
{\bf Question}: & Is there a path from $D_s$ to $D_t$ in $R_{{\textsc{ds}}}(G, k, k+1)$?
\end{tabular}
\vspace{8pt}

Note that since we only allow independent sets of size $k$ and $k - 1$
the ISR problem is equivalent to reconfiguration under the token jumping model considered by Ito et al.~\cite{IKO14,IKOSUY14}.
{\sc ISR} is known to be \PSPACE-complete on graphs of
bounded bandwidth~\cite{MNRW14,WROCHNA14} (hence pathwidth and treewidth)
and \WONE-hard on general graphs~\cite{IKOSUY14}. On the positive side, the problem was shown
fixed-parameter tractable, with parameter $k$, for graphs of
bounded degree, planar graphs, and graphs excluding
$K_{3,d}$ as a (not necessarily induced) subgraph, for any constant $d$~\cite{IKO14,IKOSUY14}.
We push this boundary further by showing that the problem remains
fixed-parameter tractable for graphs of
bounded degeneracy and nowhere-dense graphs (Figure~\ref{fig-graph-classes}).
As a corollary, we answer positively an open question concerning the parameterized complexity
of the problem (parameterized by $k$) on graphs of bounded treewidth.

For {\sc DSR}, we first show that the problem is \WONE-hard on general graphs
by adapting the well-known (parameter-preserving) reduction from {\sc Independent Set} to {\sc Dominating Set}.
Then, we show that the problem is fixed-parameter tractable, with parameter $k$,
for graphs excluding $K_{d,d}$ as a (not necessarily induced) subgraph, for any constant $d$.
Note that this class of graphs includes both nowhere-dense and bounded degeneracy graphs and
is the ``largest'' class on which the {\sc Dominating Set} problem is known to be in \FPT~\cite{PRS09,TV12}.

Clearly, our main open question is whether {\sc ISR} remains fixed-parameter
tractable on graphs excluding $K_{d,d}$ as a subgraph. Intuitively,
all of the classes we consider fall under the category of ``sparse'' graph classes.
Hence, in some sense, one would not expect a sparse graph to have ``too many'' dominating
sets of fixed small size $k$ as $n$ becomes larger and larger. For independent sets,
the situation is reversed. As $n$ grows larger, so does the number of independent sets
of fixed size $k$. So it remains to be seen whether
some structural properties of graphs excluding $K_{d,d}$ as a subgraph can be used
to settle our open question or whether the problem becomes \WONE-hard.
In the latter case, this would be the first example of a \WONE-hard problem (in general),
which is in \FPT\ on a class $\mathscr{C}$ of graphs but where the reconfiguration version is not;
finding such a problem, we believe, is interesting in its own right.
Another open question is whether we can adapt our results for {\sc ISR}
to find shortest reconfiguration sequences. Our algorithm for {\sc DSR} does in fact
guarantee shortest reconfiguration sequences but, as we shall
see, the same does not hold for both {\sc ISR} algorithms.

\section{Preliminaries}
For an in-depth review of general graph
theoretic definitions we refer the reader to the book of Diestel~\cite{D05}.
Unless otherwise stated, we assume that each graph $G$ is a
simple, undirected graph with vertex set $V(G)$ and
edge set $E(G)$, where $|V(G)| = n$ and $|E(G)| = m$.
The {\em open neighborhood}, or simply {\em neighborhood}, of a
vertex $v$ is denoted by $N_G(v) = \{u \mid uv \in E(G)\}$, the
{\em closed neighborhood} by $N_G[v] = N_G(v) \cup \{v\}$. Similarly, for a set of vertices $S \subseteq V(G)$,
we define $N_G(S) = \{v \mid uv \in E(G), u \in S, v \not\in S \}$ and $N_G[S] = N_G(S) \cup S$.
The {\em degree} of a vertex is $|N_G(v)|$.
We drop the subscript $G$ when clear from context.
A {\em subgraph} of $G$ is a graph $G'$
such that $V(G') \subseteq V(G)$ and $E(G') \subseteq E(G)$.
The {\em induced subgraph} of $G$ with respect to $S \subseteq V(G)$ is denoted by $G[S]$;
$G[S]$ has vertex set $S$ and edge set $\{uv \in E(G[S]) \mid u, v \in S, uv \in E(G)\}$.
We denote by $\Delta(G)$ and $\delta(G)$ the maximum and minimum degree of $G$, respectively.

A {\em walk} of length $\ell$ from $v_0$ to $v_\ell$ in $G$ is a vertex sequence $v_0, \ldots, v_\ell$, such that
for all $i \in \{0, \ldots, \ell-1\}$, $v_iv_{i + 1} \in E(G)$.
It is a {\em path} if all vertices are distinct. It is a {\em cycle}
if $\ell \geq 3$, $v_0 = v_\ell$, and $v_0, \ldots, v_{\ell - 1}$ is a path.
A path from vertex $u$ to vertex $v$ is also called a {\em $uv$-path}.
The {\em distance} between two vertices $u$
and $v$ of $G$, $dist_G(u, v)$, is the length of a shortest $uv$-path in $G$ (positive infinity if no such path exists).
The {\em eccentricity} of a vertex $v \in V(G)$, $ecc(v)$, is equal to $max_{u \in V(G)}(dist_G(u,v))$.
The {\em radius} of $G$, $rad(G)$, is equal to $min_{v \in V(G)}(ecc(v))$.
The {\em diameter} of $G$, $diam(G)$, is equal to $max_{v \in V(G)}(ecc(v))$.
For $r \geq 0$, the {\em $r$-neighborhood} of a vertex $v \in V(G)$ is defined as
$N^r_G[v] = \{u \mid dist_G(u, v) \leq r\}$. We write
$B(v, r) = N^r_G[v]$ and call it a {\em ball of radius $r$ around $v$};
for $S \subseteq V(G)$, $B(S, r) = \bigcup_{v \in S} N^r_G[v]$.

{\em Contracting} an edge $uv$ of $G$ results in a new graph $H$ in which
the vertices $u$ and $v$ are deleted and replaced by a new vertex $w$ that is adjacent
to $N_G(u) \cup N_G(v) \setminus \{u, v\}$.
If a graph $H$ can be obtained from $G$ by repeatedly
contracting edges, $H$ is said to be a {\em contraction} of $G$.
If $H$ is a subgraph of a contraction of $G$, then $H$ is said
to be a {\em minor} of $G$, denoted by $H \preceq_m G$.
An equivalent characterization of minors states that $H$
is a minor of $G$ if there is a map that associates to each vertex $v$ of $H$ a non-empty connected
subgraph $G_v$ of $G$ such that $G_u$ and $G_v$ are disjoint for $u \neq v$ and whenever there is an edge
between $u$ and $v$ in $H$ there is an edge in $G$ between some node in $G_u$ and some node in $G_v$.
The subgraphs $G_v$ are called {\em branch sets}.
$H$ is a {\em minor at depth $r$ of $G$}, $H \preceq^r_m G$, if $H$ is a minor of $G$ which is
witnessed by a collection of branch sets $\{G_v \mid v \in V(H)\}$, each of which induces a graph
of radius at most $r$. That is, for each $v \in V(H)$, there is a $w \in V(G_v)$
such that $V(G_v) \subseteq N^r_{G_v}[w]$.

\paragraph{Sparse graph classes.}
We define the three main classes we consider. Figure~\ref{fig-graph-classes} illustrates
the relationship between these classes and some other well-known classes of sparse graphs.
We refer the reader to~\cite{CLASSES,NO10,NO08} for more details.

\begin{definition}[\cite{NO10,NO08}]\label{def:nowhere-dense}
A class of graphs $\mathscr{C}$ is said to be {\em nowhere-dense} if
for every $d \geq 0$ there exists a graph $H_d$ such that $H_d \not\preceq^d_m G$ for all $G \in \mathscr{C}$.
$\mathscr{C}$ is {\em effectively nowhere-dense} if the map $d \mapsto H_d$ is computable.
Otherwise, $\mathscr{C}$ is said to be {\em somewhere-dense}.
\end{definition}

Nowhere-dense classes of graphs were introduced by Nesetril and Ossona de Mendez~\cite{NO10,NO08}
and ``nowhere-density'' turns out to be a very robust concept with several
natural characterizations~\cite{GKS13}.
We use one such characterization in Section~\ref{subsec:nowheredense-is}.
It follows from the definition that planar graphs, graphs of bounded treewidth,
graphs of bounded degree, $H$-minor-free graphs, and
$H$-topological-minor-free graphs are nowhere-dense~\cite{NO10,NO08}.
As in the work of Dawar and Kreutzer~\cite{DK13}, we are only interested in effectively nowhere-dense classes;
all natural nowhere-dense classes are effectively nowhere-dense, but it is possible to construct
artificial classes that are nowhere-dense, but not effectively so.

\begin{definition}
A class of graphs $\mathscr{C}$ is said to be {\em $d$-degenerate} if there is an
integer $d$ such that every induced subgraph of any graph $G \in \mathscr{C}$ has a vertex of degree at most $d$.
\end{definition}

Graphs of bounded degeneracy and nowhere-dense graphs are incomparable~\cite{GKS14}.
In other words, graphs of bounded degeneracy are somewhere-dense.

\begin{proposition}[\cite{LW70}]\label{prop:degenerate-average}
The number of edges in a $d$-degenerate graph is at
most $dn$ and hence its average degree is at most $2d$.
\end{proposition}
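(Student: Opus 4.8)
The plan is to establish the edge bound by induction on $n = |V(G)|$, using the fact that $d$-degeneracy is preserved under vertex deletion. For the base case, a graph on at most one vertex has no edges, so $|E(G)| = 0 \le dn$. For the inductive step, I would apply the definition of $d$-degeneracy to $G$ itself — which is trivially an induced subgraph of $G$ — to obtain a vertex $v$ with $\deg_G(v) \le d$. Since every induced subgraph of $G - v$ is also an induced subgraph of $G$, the graph $G - v$ is again $d$-degenerate, so the induction hypothesis gives $|E(G - v)| \le d(n - 1)$. Restoring $v$ together with its at most $d$ incident edges then yields $|E(G)| \le d(n-1) + d = dn$.

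The bound on the average degree is immediate from the handshaking lemma: $\sum_{v \in V(G)} \deg_G(v) = 2|E(G)| \le 2dn$, and dividing by $n$ shows that the average degree is at most $2d$.

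There is no genuine obstacle here; the only point worth a line of justification is that degeneracy is hereditary, i.e.\ that $G - v$ inherits $d$-degeneracy from $G$, which holds because the induced subgraphs of $G - v$ form a subfamily of the induced subgraphs of $G$. As an alternative to induction, one could instead compute a \emph{degeneracy ordering} $v_1, \ldots, v_n$ by repeatedly removing a vertex of minimum degree; in this ordering each $v_i$ has at most $d$ neighbours among $v_{i+1}, \ldots, v_n$, so charging every edge to its later endpoint charges each vertex at most $d$ times, again giving $|E(G)| \le dn$.
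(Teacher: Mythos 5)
Your proof is correct. The paper does not actually prove this proposition --- it is stated as a known result and attributed to the cited reference --- so there is nothing to compare against; your induction on $n$ (peel off a vertex of degree at most $d$, use that degeneracy is hereditary, add back at most $d$ edges) is the standard argument, and the average-degree bound via the handshaking lemma is immediate.
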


Degeneracy is a hereditary property, hence any induced subgraph of
a $d$-degenerate graph is also $d$-degenerate.
It is well-known that graphs of treewidth at most $d$ are also $d$-degenerate.
Moreover a $d$-degenerate graph cannot
contain $K_{d+1,d+1}$ as a subgraph, which brings us to the
class of biclique-free graphs.
The relationship between bounded degeneracy, nowhere-dense,
and $K_{d,d}$-free graphs was shown by Philip et al. and Telle and Villanger~\cite{PRS09,TV12}.

\begin{definition}
A class of graphs $\mathscr{C}$ is said to be {\em $d$-biclique-free}, for some $d > 0$,
if $K_{d,d}$ is not a subgraph of any $G \in \mathscr{C}$, and it is
said to be {\em biclique-free} if it is $d$-biclique-free for some $d$.
\end{definition}

\begin{proposition}[\cite{PRS09,TV12}]
Any degenerate or nowhere-dense class of graphs is biclique-free,
but not vice-versa.
\end{proposition}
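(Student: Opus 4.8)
The plan is to prove the two directions of the proposition separately, treating the implication ``degenerate or nowhere-dense $\Rightarrow$ biclique-free'' by a case split. The degenerate case is immediate from the remark recorded just before the statement: every subgraph of a $d$-degenerate graph is again $d$-degenerate, whereas $K_{d+1,d+1}$ has minimum degree $d+1$, so $K_{d+1,d+1}$ cannot be a subgraph of any member of a $d$-degenerate class, which is therefore $(d+1)$-biclique-free.

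For the nowhere-dense case I would argue the contrapositive. Suppose $\mathscr{C}$ is not biclique-free, so for every $n$ there is a $G_n \in \mathscr{C}$ containing $K_{n,n}$ as a subgraph; I claim $\mathscr{C}$ is then somewhere-dense. The key point is that $K_{n,n}$ has $K_n$ as a depth-$1$ minor: writing the two color classes as $\{a_1,\dots,a_n\}$ and $\{b_1,\dots,b_n\}$, the branch sets $\{a_i,b_i\}$ are pairwise disjoint, each induces an edge (hence has radius $\le 1$), and for $i\neq j$ the edge $a_ib_j$ supplies the required adjacency. The same branch sets witness $K_n \preceq^1_m G_n$, and restricting them to the vertex set of a given graph $H$ on at most $n$ vertices (a subgraph of $K_n$) witnesses $H \preceq^1_m G_n$. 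Hence no graph $H_1$ satisfies $H_1 \not\preceq^1_m G$ for all $G\in\mathscr{C}$, so $\mathscr{C}$ is somewhere-dense. Together with the degenerate case this settles the forward direction.

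For ``not vice versa'' I would exhibit a single class that is biclique-free but neither degenerate nor nowhere-dense; note that the two hypotheses cannot be relaxed one at a time, since for instance the class of $2$-degenerate graphs is somewhere-dense yet still degenerate, so the witnessing class must be of unbounded degeneracy and somewhere-dense at once. The natural candidate is the class $\mathscr{C}^{\star}$ of incidence graphs of finite projective planes $PG(2,q)$, $q$ a prime power (or, equivalently, the class of all $K_{2,2}$-free graphs, with these incidence graphs as certificates). Each such incidence graph is bipartite of girth $6$ — two points lie on a unique common line and two lines meet in a unique point, so there is no $C_4$ — hence $\mathscr{C}^{\star}$ is $2$-biclique-free; and it is $(q+1)$-regular, so has degeneracy at least $q+1$, whence $\mathscr{C}^{\star}$ has unbounded degeneracy and is not degenerate. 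It remains to see $\mathscr{C}^{\star}$ is somewhere-dense, which I would do by exhibiting $K_{q+1}$ as a depth-$1$ minor of the incidence graph of $PG(2,q)$: fix a $(q+1)$-arc $S$ (e.g.\ a conic), so no three points of $S$ are collinear; orient the complete graph on $S$ arbitrarily, and for $p\in S$ let the branch set $B_p$ consist of $p$ together with, for each out-neighbour $p'$ of $p$, the line through $p$ and $p'$. Each $B_p$ is a star centered at $p$, so has radius $\le 1$; the $B_p$ are pairwise disjoint, since a line lying in both $B_p$ and $B_{p'}$ would pass through $p$ and $p'$ and so was assigned to only one of them (here the arc condition is used to conclude $\ell\cap S=\{p,p'\}$); and for $p\neq p'$ the line $pp'$ lies in exactly one of $B_p,B_{p'}$ while being incident to the center of the other, giving the required edge. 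Thus arbitrarily large complete graphs, and hence all graphs, occur as depth-$1$ minors across $\mathscr{C}^{\star}$, so it is somewhere-dense.

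The first two parts are routine. The real obstacle is ``not vice versa'', where one must produce a class that is sparse enough to exclude a fixed biclique yet dense enough to be simultaneously non-degenerate and somewhere-dense, and the one genuinely nontrivial verification is the depth-$1$ clique-minor inside the girth-$6$ incidence graph — which is precisely what forces the construction to route every edge of the clique through a distinct secant line of an arc.
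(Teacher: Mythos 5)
Your proof is correct. Note that the paper does not actually prove this proposition --- it is stated with a citation to the works of Philip et al.\ and Telle and Villanger --- so there is no in-paper argument to compare against; your write-up supplies a complete, self-contained proof. The degenerate case is exactly the one-line observation the paper itself records just before the statement (minimum degree of $K_{d+1,d+1}$ versus the hereditary degree bound). Your contrapositive for the nowhere-dense case is the standard shallow-minor argument and checks out: the branch sets $\{a_i,b_i\}$ do witness $K_n \preceq^1_m K_{n,n}$, a depth-$1$ minor of a subgraph is a depth-$1$ minor of the host, and taking subgraphs of $K_n$ kills every candidate $H_1$, so the class fails the nowhere-dense condition already at $d=1$. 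For the separation, you correctly identify that the witness must be simultaneously of unbounded degeneracy and somewhere-dense (a $1$-subdivided clique class would only give the second), and the incidence graphs of $PG(2,q)$ are the standard example from the cited references: girth $6$ gives $K_{2,2}$-freeness, $(q+1)$-regularity gives unbounded degeneracy, and your arc-based branch sets are a correct (if slightly elaborate) way of saying that the points of a $(q+1)$-arc together with their $\binom{q+1}{2}$ distinct secant lines induce a $1$-subdivision of $K_{q+1}$ as a subgraph, hence a depth-$1$ clique minor. The only cosmetic caveat is that one should fix the existence of a $(q+1)$-arc (a conic works for every prime power $q$), which you do.
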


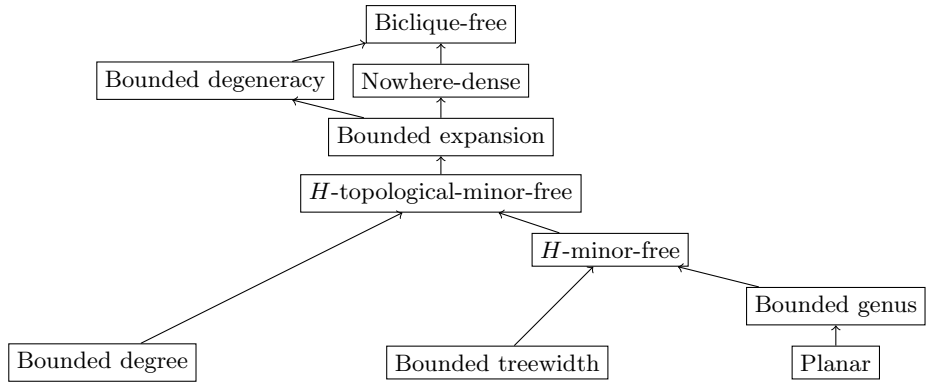
\begin{figure}
\centering
    \begin{tikzpicture}[->, scale=.75, auto=left, remember picture,every node/.style={rectangle},inner/.style={rectangle},outer/.style={rectangle}]


    \node[outer] (DEGREE) at (0,0) [rectangle, fill=white, draw=black] {Bounded degree};
    \node[outer] (TREEWIDTH) at (7,0) [rectangle, fill=white, draw=black] {Bounded treewidth};
    \node[outer] (PLANAR) at (13,0) [rectangle, fill=white, draw=black] {Planar};
    \node[outer] (GENUS) at (13,1) [rectangle, fill=white, draw=black] {Bounded genus};
    \node[outer] (MINOR) at (9,2) [rectangle, fill=white, draw=black] {$H$-minor-free};
    \node[outer] (TOPMINOR) at (6,3) [rectangle, fill=white, draw=black] {$H$-topological-minor-free};
    \node[outer] (EXPANSION) at (6,4) [rectangle, fill=white, draw=black] {Bounded expansion};
    \node[outer] (DENSE) at (6,5) [rectangle, fill=white, draw=black] {Nowhere-dense};
    \node[outer] (DEGENERATE) at (2,5) [rectangle, fill=white, draw=black] {Bounded degeneracy};
    \node[outer] (BICLIQUE) at (6,6) [rectangle, fill=white, draw=black] {Biclique-free};

    \foreach \from/\to in {PLANAR/GENUS,TREEWIDTH/MINOR,GENUS/MINOR,MINOR/TOPMINOR,DEGREE/TOPMINOR}
    \draw (\from) -> (\to);

    \foreach \from/\to in {TOPMINOR/EXPANSION,EXPANSION/DENSE,DENSE/BICLIQUE,EXPANSION/DEGENERATE,DEGENERATE/BICLIQUE}
    \draw (\from) -> (\to);

    \end{tikzpicture}
\caption{Sparse graph classes~\cite{CLASSES,NO10,NO08}. Arrows indicate inclusion.}
\label{fig-graph-classes}
\end{figure}

\paragraph{Parameterized complexity.}
Using the framework developed by Downey and Fellows~\cite{DF97}, a
{\em parameterized problem} includes in the input a parameter $p$.
For a parameterized problem $\mathcal{Q}$ with inputs of the form $(x,p)$, $|x| = n$ and
$p$ a positive integer, $\mathcal{Q}$ is {\em fixed-parameter tractable}
(or in \FPT) if it can be decided in $f(p) n^c$ time,
where $f$ is an arbitrary function and $c$ is a
constant independent of both $n$ and $p$.
$\mathcal{Q}$ is in the class \XP\ if it can be decided in $n^{f(p)}$ time.
$\mathcal{Q}$ has a {\em kernel} of size $f(p)$ if there is an algorithm that
transforms the input $(x, p)$ to $(x', p')$ in
polynomial time (with respect to $|x|$ and $p$) such that $(x, p)$ is a
yes-instance if and only if $(x', p')$ is a
yes-instance, $p' \leq g(p)$, and $|x'| \leq f(p)$.
Each problem in \FPT\ has a kernel, possibly of exponential (or worse) size~\cite{DF97}.

In order to distinguish between parameterized problems solvable in $n^{f(p)}$ time and
parameterized problems solvable in $f(p)n^c$ time,
Downey and Fellows~\cite{DF97} introduced the {\em \WW-hierarchy}.
The hierarchy consists of a complexity class \WT\
for every integer $t \geq 1$ such that $\WT \subseteq \WTT$ for all $t$.
They proved that $\FPT \subseteq \WONE \subseteq \WTWO \subseteq \ldots \subseteq \WT$
and conjectured that strict containment holds.
In particular, the assumption $\FPT \subset \WONE$ is a natural
parameterized analogue of the conjecture that $\PP \neq \NP$.
Moreover, Downey and Fellows showed that the {\sc Independent Set} problem
parameterized by solution size is \WONE-complete
and the {\sc Dominating Set} problem parameterized by
solution size is \WTWO-complete.
Showing hardness in the parameterized setting is usually accomplished
using \FPT\ reductions.
The reader is referred to the books of
Niedermeier, Flum, and Grohe for more on parameterized complexity~\cite{FG06,N06}.

\paragraph{Reconfiguration.}
For any vertex-subset problem $\mathcal{Q}$, graph $G$, and positive integer $k$,
we consider the {\em reconfiguration graph} $R_{\mathcal{Q}}(G, k, k+1)$ when
$\mathcal{Q}$ is a minimization problem (e.g. {\sc Dominating Set})
and the reconfiguration graph $R_{\mathcal{Q}}(G, k - 1, k)$ when $\mathcal{Q}$ is
a maximization problem (e.g. {\sc Independent Set}).
A set $S \subseteq V(G)$ has a corresponding node in $V(R_{\mathcal{Q}}(G, r_l, r_u))$,
$r_l \in \{k - 1,k\}$ and $r_u \in \{k, k+1\}$, if and only if
$S$ is a feasible solution for $\mathcal{Q}$ and $r_l \leq |S| \leq r_u$.
We refer to {\em vertices} in $G$ using lower case letters (e.g. $u, v$) and to the {\em nodes} in
$R_{\mathcal{Q}}(G, r_l, r_u)$, and by extension their associated feasible solutions, using
upper case letters (e.g. $A, B$).
If $A,B \in V(R_{\mathcal{Q}}(G, r_l, r_u))$ then there exists an edge
between $A$ and $B$ in $R_{\mathcal{Q}}(G, r_l, r_u)$ if and only if
there exists a vertex $u \in V(G)$ such that $\{A \setminus B\} \cup \{B \setminus A\} = \{ u \}$.
Equivalently, for $A \Delta B = \{A \setminus B\} \cup \{B \setminus A\}$
the {\em symmetric difference} of $A$ and $B$, $A$ and $B$ share an edge in
$R_{\mathcal{Q}}(G, r_l, r_u)$ if and only if $|A \Delta B| = 1$.

We write $A \leftrightarrow B$ if there exists a path
in $R_{\mathcal{Q}}(G, r_l, r_u)$, a reconfiguration sequence, joining $A$ and $B$.
Any reconfiguration sequence from {\em source} feasible solution $S_s$ to
{\em target} feasible solution $S_t$, which we sometimes
denote by $\sigma = \langle S_0, S_1, \ldots, S_\ell \rangle$, for some $\ell$, has
the following properties:

\begin{itemize}
\item[-] $S_0 = S_s$ and $S_\ell = S_t$,
\item[-] $S_i$ is a feasible solution for $\mathcal{Q}$ for all $0 \leq i \leq \ell$,
\item[-] $|S_i \Delta S_{i+1}| = 1$ for all $0 \leq i < \ell$, and
\item[-] $r_l \leq |S_i| \leq r_u$ for all $0 \leq i \leq \ell$.
\end{itemize}

We denote the {\em length} of $\sigma$ by $|\sigma|$.
For $0 < i \leq \ell$, we say vertex $v \in V(G)$ is {\em added} at step/index/position/slot $i$
if $v \not\in S_{i - 1}$ and $v \in S_i$. Similarly, a vertex $v$
is {\em removed} at step/index/position/slot $i$ if $v \in S_{i - 1}$ and $v \not\in S_i$.
A vertex $v \in V(G)$ is {\em touched} in the course of a reconfiguration
sequence if $v$ is either added or removed at least once; it is {\em untouched} otherwise.
A vertex is {\em removable} ({\em addable}) from feasible solution $S$ if
$S \setminus \{v\}$ ($S \cup \{v\}$) is also a feasible solution for $\mathcal{Q}$.
For any pair of consecutive solutions ($S_{i - 1}$, $S_{i}$) in $\sigma$, we say
$S_{i}$ ($S_{i - 1}$) is the {\em successor} ({\em predecessor}) of $S_{i - 1}$ ($S_{i}$).
A reconfiguration sequence $\sigma' = \langle S_0, S_1, \ldots, S_{\ell'} \rangle$
is a {\em prefix} of $\sigma = \langle S_0, S_1, \ldots, S_{\ell} \rangle$ if $\ell' < \ell$.

We adapt the concept of irrelevant vertices from parameterized
complexity to introduce the notions of irrelevant and strongly irrelevant vertices for reconfiguration.
Since these notions apply to almost any reconfiguration problem, we give general definitions.

\begin{definition}\label{def-irr}
For any vertex-subset problem $\mathcal{Q}$, $n$-vertex graph $G$, positive integers $r_l$ and $r_u$,
and $S_s, S_t \in V(R_{\mathcal{Q}}(G, r_l, r_u))$ such that there exists a reconfiguration sequence from
$S_s$ to $S_t$ in $R_{\mathcal{Q}}(G, r_l, r_u)$, we say a vertex $v \in V(G)$ is {\em irrelevant} (with respect to $S_s$ and $S_t$)
if and only if $v \not\in S_s \cup S_t$ and there exists a reconfiguration sequence from $S_s$ to $S_t$
in $R_{\mathcal{Q}}(G, r_l, r_u)$ which does not touch $v$.
We say $v$ is {\em strongly irrelevant} (with respect to $S_s$ and $S_t$) if it is
irrelevant and the length of a shortest reconfiguration
sequence from $S_s$ to $S_t$ which does not touch $v$ is no greater than
the length of a shortest reconfiguration sequence which does (if the latter sequence exists).
\end{definition}

At a high level, it is enough to consider irrelevant vertices when trying to find {\em any}
reconfiguration sequence between two feasible solutions, but
strongly irrelevant vertices must be considered if we wish to find a {\em shortest}
reconfiguration sequence. As we shall see, our algorithm for {\sc DSR} does in fact find strongly irrelevant
vertices and can therefore be used to find shortest reconfiguration sequences. For {\sc ISR}, we
are only able to find irrelevant vertices and reconfiguration sequences are not guaranteed to be
of shortest possible length.

\section{Independent set reconfiguration}

\subsection{Graphs of bounded degeneracy}
To show that the {\sc ISR} problem is fixed-parameter tractable on
$d$-degenerate graphs, for some integer $d$,
we will proceed in two stages. In the first stage, we will show, for an instance $(G, I_s, I_t, k)$, that
as long as the number of low-degree vertices in $G$ is ``large enough'' we can
find an irrelevant vertex (Definition~\ref{def-irr}).
Once the number of low-degree vertices is bounded, a simple counting
argument (Proposition~\ref{fact:bound-on-high})
shows that the size of the remaining graph is also bounded and hence we can
solve the instance by exhaustive enumeration.

\begin{proposition}\label{fact:bound-on-high}
Let $G$ be an $n$-vertex $d$-degenerate graph, $S_1 \subseteq V(G)$ be the set of
vertices of degree at most $2d$, and $S_2 = V(G) \setminus S_1$.
If $|S_1| < s$, then $|V(G)| \leq (2d + 1)s$.
\end{proposition}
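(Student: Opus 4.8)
The plan is to run a straightforward double-counting argument on the degree sequence of $G$, using the edge bound from Proposition~\ref{prop:degenerate-average}. First I would record that, since $G$ is $d$-degenerate on $n$ vertices, Proposition~\ref{prop:degenerate-average} gives $|E(G)| \leq dn$, hence $\sum_{v \in V(G)} \deg_G(v) = 2|E(G)| \leq 2dn$.

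Next I would isolate the contribution of $S_2$. By definition every vertex $v \in S_2 = V(G) \setminus S_1$ has $\deg_G(v) \geq 2d + 1$, so
\[
  (2d + 1)\,|S_2| \;\leq\; \sum_{v \in S_2} \deg_G(v) \;\leq\; \sum_{v \in V(G)} \deg_G(v) \;\leq\; 2dn,
\]
which yields $|S_2| \leq \frac{2d}{2d+1}\, n$. Then I would combine this with $n = |S_1| + |S_2|$ and the hypothesis $|S_1| < s$ to obtain
\[
  n \;=\; |S_1| + |S_2| \;<\; s + \frac{2d}{2d+1}\, n,
\]
so that $\frac{1}{2d+1}\, n = n - \frac{2d}{2d+1}\, n < s$, i.e. $n < (2d+1)s$, which in particular gives $|V(G)| = n \leq (2d+1)s$ as claimed.

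There is essentially no obstacle here: the only thing to be mildly careful about is that the degree threshold defining $S_1$ is exactly $2d$ (the average degree bound from Proposition~\ref{prop:degenerate-average}), so that every excluded vertex overshoots the average by at least one, which is precisely what makes the averaging argument close. One could alternatively phrase it via $\sum_{v \in S_2}(\deg_G(v) - 2d) \geq |S_2|$ together with $\sum_{v}\deg_G(v) \leq 2dn$, but the displayed inequality chain is the cleanest route.
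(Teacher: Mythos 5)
Your proof is correct and uses the same core idea as the paper: bound $\sum_{v\in S_2}\deg_G(v)$ from below by $(2d+1)|S_2|$ and from above via the $d$-degeneracy edge bound of Proposition~\ref{prop:degenerate-average}. The paper phrases this as a contradiction on the average degree while you argue directly (and in fact obtain the slightly stronger strict inequality $n<(2d+1)s$), but the argument is essentially identical.
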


\begin{proof}
The number of edges in a $d$-degenerate graph is at
most $dn$ and hence its average degree is at most $2d$ (Proposition~\ref{prop:degenerate-average}).
If $|V(G)| = (2d + 1)s + c$, for $c \geq 1$, then $|S_2| = |V(G) \setminus S_1| > 2ds + c$, $\sum_{v \in S_2}{|N_G(v)|} > (2ds + c)(2d + 1)$, and
we obtain the following contradiction:
\begin{align}
{\sum_{v \in S_1}{|N_G(v)|} + \sum_{v \in S_2}{|N_G(v)|} \over |V(G)|} &> {(2ds + c)(2d + 1) \over (2d + 1)s + c} \notag \\
&= {4d^2s + 2ds + 2dc + c \over (2d + 1)s + c} \notag \\
&= {2d(2ds + s + c) + c \over 2ds + s + c} > 2d. \notag
\end{align}
\qed
\end{proof}

To find irrelevant vertices, we make use of the following classical result of Erd\~{o}s and Rado~\cite{ER60},
also known in the literature as the sunflower lemma. We first define the terminology
used in the statement of the theorem. A {\em sunflower} with $k$ {\em petals}
and a {\em core} $Y$ is a collection of sets $S_1,\ldots,S_k$ such that $S_i \cap S_j = Y$ for all $i \neq j$;
the sets $S_i \setminus Y$ are petals and we require none of them to be empty. Note that
a family of pairwise disjoint sets is a sunflower (with an empty core).

\begin{theorem}[Sunflower Lemma~\cite{ER60}]\label{theorem:sunflower}
Let $\mathscr{A}$ be a family of sets (without duplicates) over a universe $\mathscr{U}$,
such that each set in $\mathscr{A}$ has cardinality at most $d$.
If $|A| > d!(k - 1)^d$, then $\mathscr{A}$ contains a sunflower with $k$ petals and such a
sunflower can be computed in time polynomial in $|\mathscr{A}|$, $|\mathscr{U}|$, and $k$.
\end{theorem}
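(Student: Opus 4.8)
The plan is to prove the lemma by induction on $d$, the maximum size of a set in $\mathscr{A}$. The base case $d = 0$ is vacuous: a duplicate-free family of sets of size at most $0$ has at most one member, so the hypothesis $|\mathscr{A}| > d!(k-1)^d = 1$ can never hold. It is also harmless to assume throughout that $\emptyset \notin \mathscr{A}$, since an empty set can never occur in a sunflower with at least one (nonempty) petal.

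For the inductive step, fix $d \geq 1$, assume the statement for $d - 1$, and let $\mathscr{A}$ be a family of distinct nonempty sets of size at most $d$ with $|\mathscr{A}| > d!(k-1)^d$. First I would greedily build a maximal subfamily $S_1, \dots, S_t \in \mathscr{A}$ of pairwise disjoint sets (repeatedly add any set disjoint from those already chosen); this takes time polynomial in $|\mathscr{A}|$ and $|\mathscr{U}|$. If $t \geq k$, then $S_1, \dots, S_k$ is a sunflower with $k$ petals and empty core, and we are done. Otherwise $t \leq k - 1$, so $Y := S_1 \cup \dots \cup S_t$ satisfies $|Y| \leq td \leq (k-1)d$. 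By maximality of the disjoint subfamily, every set of $\mathscr{A}$ meets $Y$, hence $\sum_{x \in Y} |\{S \in \mathscr{A} : x \in S\}| \geq |\mathscr{A}|$, and by averaging some element $x \in Y$ lies in at least $|\mathscr{A}| / |Y| > d!(k-1)^d / ((k-1)d) = (d-1)!(k-1)^{d-1}$ sets of $\mathscr{A}$.

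Then I would pass to $\mathscr{A}' := \{\, S \setminus \{x\} : S \in \mathscr{A},\ x \in S \,\}$. Because the sets of $\mathscr{A}$ containing $x$ are distinct and all contain $x$, deleting $x$ keeps them distinct, so $|\mathscr{A}'| > (d-1)!(k-1)^{d-1}$; moreover every member of $\mathscr{A}'$ has size at most $d - 1$. Applying the induction hypothesis to $\mathscr{A}'$ yields a sunflower $P_1, \dots, P_k$ with core $Y'$ and nonempty petals $P_i$, none of which contains $x$. Re-attaching $x$, the sets $P_1 \cup \{x\}, \dots, P_k \cup \{x\}$ all belong to $\mathscr{A}$ and form a sunflower with core $Y' \cup \{x\}$ and the same $k$ nonempty petals $P_1, \dots, P_k$ — exactly the sunflower we want. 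Since the recursion has depth at most $d$ and the work at each level (extracting the disjoint subfamily, counting incidences with $Y$, forming $\mathscr{A}'$) is polynomial, the overall running time is polynomial in $|\mathscr{A}|$, $|\mathscr{U}|$, and $k$, as claimed.

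I do not expect a genuine obstacle here; the one place that needs care is the bookkeeping, namely verifying the identity $d!(k-1)^d / ((k-1)d) = (d-1)!(k-1)^{d-1}$ so that the count of sets through $x$ exactly meets the threshold of the $(d-1)$-dimensional instance, confirming that distinctness (hence the size of $\mathscr{A}'$) is not lost when $x$ is deleted, and checking that the reconstructed core $Y' \cup \{x\}$ together with the untouched petals really is a valid sunflower.
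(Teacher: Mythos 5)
The paper does not prove this statement at all --- it is quoted as a classical result of Erd\H{o}s and Rado with a citation, so there is no in-paper argument to compare against. Your proof is the standard textbook induction on $d$ (greedy maximal disjoint subfamily; if it has fewer than $k$ members, average over its union $Y$ to find a high-degree element $x$, recurse on the link of $x$, and reattach $x$ to the core), and the arithmetic checks out: $|Y|\le (k-1)d$, so some $x$ lies in more than $d!(k-1)^d/\bigl((k-1)d\bigr)=(d-1)!(k-1)^{d-1}$ sets, distinctness is preserved when $x$ is deleted, and $(P_i\cup\{x\})\cap(P_j\cup\{x\})=Y'\cup\{x\}$ with petals $P_i\setminus Y'$ unchanged, so the lifted family is a valid sunflower. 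The running-time claim (depth-$d$ recursion, polynomial work per level) is also fine. The one loose end is your handling of the empty set: you dismiss it ``throughout,'' but the recursion does not preserve that convention --- if $\{x\}\in\mathscr{A}$ then $\emptyset\in\mathscr{A}'$, and since the paper's definition of a sunflower forbids empty petals, a set equal to the core can never participate; discarding $\emptyset$ from $\mathscr{A}'$ costs one set and turns your strict inequality into a non-strict one, so the induction hypothesis no longer applies verbatim. This is exactly why the lemma is usually stated for families of sets of cardinality \emph{exactly} $d$ (where the issue cannot arise for $d\ge 1$); for the ``at most $d$'' version you should either say a word about this degenerate case or reduce to the uniform version. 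It is a genuinely minor, patchable bookkeeping point, not a flaw in the approach.
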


\begin{lemma}\label{lemma:low-degree}
Let $(G, I_s, I_t, k)$ be an instance of {\sc ISR} where $G$ is $d$-degenerate and
let $B$ be the set of vertices in $V(G) \setminus \{I_s \cup I_t\}$ of degree at most $2d$.
If $|B| > (2d + 1)!(2k - 1)^{2d + 1}$, then there exists an
irrelevant vertex $v \in V(G) \setminus \{I_s \cup I_t\}$ such that
$(G, I_s, I_t, k)$ is a yes-instance if and only if $(G', I_s, I_t, k)$ is a yes-instance,
where $G'$ is obtained from $G$ by deleting $v$ and all edges incident on $v$.
\end{lemma}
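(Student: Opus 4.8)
The plan is to apply the Sunflower Lemma to the closed neighborhoods of the low-degree vertices in $B$, extract a large sunflower, and argue that the core of that sunflower already "protects" one of the petals, so that the petal's center vertex can be safely deleted. Concretely, for each $v \in B$ let $S_v = N_G[v]$; each such set has cardinality at most $2d+1$ since $v$ has degree at most $2d$. The family $\mathscr{A} = \{S_v : v \in B\}$ therefore consists of sets of size at most $2d+1$, and since $|B| > (2d+1)!(2k-1)^{2d+1}$, Theorem~\ref{theorem:sunflower} yields a sunflower with $2k$ petals $S_{v_1}, \ldots, S_{v_{2k}}$ and some core $Y$. (Duplicates are not an issue: distinct $v$ give distinct $S_v$ because $v \in S_v \setminus S_u$ whenever $u \neq v$, as long as we first discard the at most... well, we can simply note $S_v = S_u$ forces $v \sim u$ and $N[v]=N[u]$, which can be broken by keeping one representative per class — a minor bookkeeping point.)

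Next I would identify the irrelevant vertex among the petal centers $v_1, \ldots, v_{2k}$. The key observation is that in any reconfiguration sequence $\sigma$ from $I_s$ to $I_t$, at every step the current independent set $I_i$ has size at most $k$, hence $I_i$ intersects $N_G(v_j) \setminus Y$ — the part of the petal outside the core — for at most $k-1$... more carefully: I want to find a petal center $v_j$ such that $N_G[v_j] \setminus Y$ is never "needed." Since the petals $S_{v_j} \setminus Y$ are pairwise disjoint and nonempty, and $|I_i| \le k$, at each step $i$ there are at most $k$ indices $j$ for which $I_i$ meets $S_{v_j}\setminus Y$. With $2k$ petals available and, say, the two fixed solutions $I_s, I_t$ touching at most $k$ petals each — actually $I_s \cup I_t$ has size at most $2k$ but each of $I_s, I_t$ can meet at most $k$ of the disjoint petals, so together at most $2k$; this is too weak, so instead I would use that $I_s, I_t$ avoid the $v_j$'s by hypothesis and count only how many petals $I_s\cup I_t$ can hit, then pick a petal $v_j$ with $v_j \notin I_s\cup I_t$ (automatic) whose petal $S_{v_j}\setminus Y$ is disjoint from $I_s\cup I_t$; such a $j$ exists once $2k$ exceeds $|I_s\cup I_t| \le 2k$... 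I will need to take slightly more than $2k$ petals, or argue more carefully — see the obstacle below. Having fixed such a $v = v_j$, I would take a shortest reconfiguration sequence $\sigma$, and whenever $v$ is about to be added at some step, instead route through a vertex of $Y$ (or simply skip the move), using that $N_G(v) \subseteq Y \cup (S_v\setminus Y)$ and that no neighbor of $v$ outside $Y$ is ever in the current set; the core $Y$ behaves identically toward all petal centers, so $v$ is interchangeable with any unused petal center and can be avoided entirely. This produces a reconfiguration sequence not touching $v$, i.e. $v$ is irrelevant, and deleting $v$ preserves yes/no-ness because $I_s, I_t$ do not contain $v$ and $G'$ is an induced subgraph (so every independent set of $G'$ is one of $G$, giving the converse direction for free).

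The main obstacle I expect is the bookkeeping in the replacement argument: showing that one can genuinely reroute a reconfiguration sequence around $v$ without increasing — or at least without making infeasible — the intermediate sets. The subtlety is that $\sigma$ might add $v$, then later add a vertex of $S_v\setminus Y$ that is blocked only because $v$ has since been removed; one must track a consistent "swap" so that the modified sequence stays independent and stays within the size window $[k-1,k]$ at every step. The clean way is: fix the sunflower large enough ($2k$ petals, or $2k+1$ to be safe) that there is always an unused petal center $v'$ available as a stand-in; then replace every occurrence of $v$ in $\sigma$ by $v'$ — since $N_G[v]$ and $N_G[v']$ differ only outside $Y$, and $\sigma$ never places a token on $S_v\setminus Y$ nor on $S_{v'}\setminus Y$, the sequence with $v$ replaced by $v'$ is still valid and does not touch $v$. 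Making precise "why we may assume $\sigma$ avoids both of these disjoint petals simultaneously" is exactly where the petal count $2k$ (versus $|I_s\cup I_t|\le 2k$ plus the $\le k$ tokens in any $I_i$) has to be chosen with care, and where I would spend the bulk of the write-up; everything else is routine.
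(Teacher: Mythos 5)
Your setup matches the paper's: form the family of closed neighborhoods of the vertices in $B$ (each of size at most $2d+1$), invoke Theorem~\ref{theorem:sunflower} to get a sunflower with $2k$ petals, handle the twin issue by keeping one representative per closed-neighborhood class, and declare a petal center $v_{ir}$ irrelevant. The counting you hesitate over is fine and is exactly the paper's: at any single step the current independent set has size at most $k$, so among $2k$ pairwise disjoint petals at least one is unoccupied, and if the core met the current set then $v_{ir}$ could not have been added in the first place. Your worry about $I_s\cup I_t$ hitting petals is a red herring: the replacement only alters the sets that actually contain $v_{ir}$, and $v_{ir}\notin I_s\cup I_t$, so the endpoints are never modified and you do not need a petal avoided by $I_s\cup I_t$ globally.

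The genuine gap is the one you flag yourself and then paper over: a \emph{single, fixed} stand-in $v'$ replacing every occurrence of $v_{ir}$ does not work. The per-step count guarantees a free petal \emph{at each moment}, but over the whole (possibly very long) sequence every one of the $2k$ petals may be occupied at some time, so there need not exist one $v'$ whose closed neighborhood is avoided throughout; your claim that ``$\sigma$ never places a token on $S_{v'}\setminus Y$'' is unjustified and false in general. The paper's resolution is dynamic: restrict attention to each maximal interval $\langle I_p,\dots,I_q\rangle$ during which $v_{ir}$ is held, pick a free petal center $v_{fr}$ at step $p$, and carry it forward only until the first step that adds a vertex of $N[v_{fr}]$; at that step, switch the representative to a freshly chosen free petal center $v'_{fr}$ (which exists by the same $2k$-versus-$k$ count applied to $I_{i-1}$), inserting one extra exchange move $I_{i-1}\setminus\{v_{ir}\}\cup\{v_{fr}\}\to I_{i-1}\setminus\{v_{ir}\}\cup\{v'_{fr}\}$, and repeat (at most $q-p$ times). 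This is the missing mechanism; note it also means the resulting sequence may be strictly longer than $\sigma$, so $v_{ir}$ is only irrelevant, not strongly irrelevant --- consistent with the paper's remark that the {\sc ISR} algorithm does not preserve shortest sequences, and another sign that a length-preserving static substitution was not going to be available.
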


\begin{proof}
Let $b_1$, $b_2$, $\ldots$, $b_{|B|}$ denote the vertices in $B$
and let $\mathscr{A} = \{N_G[b_1]$, $N_G[b_2]$, $\ldots$, $N_G[b_{|B|}]\}$ denote the family of
sets corresponding to the closed neighborhoods of each vertex in $B$ and set $\mathscr{U} = \bigcup_{b \in B} N[b]$.
Since $|B|$ is greater than $(2d + 1)!(2k - 1)^{2d + 1}$, we know from Theorem~\ref{theorem:sunflower} that
$\mathscr{A}$ contains a sunflower with $2k$ petals and such a
sunflower can be computed in time polynomial in $|\mathscr{A}|$ and $k$.
Note that we assume, without loss of generality, that there are no two vertices $u$ and $v$
in $V(G) \setminus \{I_s \cup I_t\}$ such that $N_G[u] = N_G[v]$, as we
can safely delete one of them from the input graph otherwise, i.e. one of the two is (strongly) irrelevant.
Let $v_{ir}$ be a vertex whose closed neighborhood corresponds to one
of those $2k$ petals. We claim that $v_{ir}$ is irrelevant
and can therefore be deleted from $G$ to obtain $G'$.

To see why, consider any reconfiguration sequence
$\sigma = \langle I_s = I_0, I_1, \ldots, I_t = I_\ell \rangle$ from $I_s$ to $I_t$
in $R_{{\textsc{is}}}(G, k-1, k)$. Since $v_{ir} \not\in I_s \cup I_t$, we let
$p$, $0 < p < \ell$, be the first index in $\sigma$ at which $v_{ir}$ is added, i.e. $v_{ir} \in I_p$ and
$v_{ir} \not\in I_i$ for all $i < p$. Moreover, we let $q + 1$, $p < q + 1 \leq \ell$
be the first index after $p$ at which $v_{ir}$ is removed, i.e.
$v_{ir} \in I_{q}$ and $v_{ir} \not\in I_{q + 1}$. We will consider
the subsequence $\sigma_s = \langle I_p, \ldots, I_q \rangle$ and show
how to modify it so that it does not touch $v_{ir}$. Applying the same procedure
to every such subsequence in $\sigma$ suffices to prove the lemma.

Since the sunflower constructed to
obtain $v_{ir}$ has $2k$ petals and the size of any independent set in $\sigma$ (or
any reconfiguration sequence in general) is at most $k$, there must exist another {\em free}
vertex $v_{fr}$ whose closed neighborhood
corresponds to one of the remaining $2k - 1$ petals which we can add at index $p$ instead
of $v_{ir}$, i.e. $v_{fr} \not\in N_G[I_p]$. We say $v_{fr}$ {\em represents} $v_{ir}$.
Assume that no such vertex exists. Then we know that either some vertex in the
core of the sunflower is in $I_{p}$ contradicting the fact that
we are adding $v_{ir}$, or every petal
of the sunflower contains a vertex in $I_{p}$, which is not possible since
the size of any independent set is at most $k$ and the number of petals is larger.
Hence, we first modify the subsequence $\sigma_s$ by adding $v_{fr}$ instead
of $v_{ir}$. Formally, we have
$\sigma'_s = \langle (I_p \setminus \{v_{ir}\}) \cup \{v_{fr}\}, \ldots, (I_q \setminus \{v_{ir}\}) \cup \{v_{fr}\}\rangle$.

To be able to replace $\sigma_s$ by $\sigma'_s$ in $\sigma$ and obtain a reconfiguration
sequence from $I_s$ to $I_t$, then all of the following conditions must hold:
\begin{itemize}
\item[(1)] $|(I_q \setminus \{v_{ir}\}) \cup \{v_{fr}\}| = k$.
\item[(2)] $(I_i \setminus \{v_{ir}\}) \cup \{v_{fr}\}$ is an independent set of $G$ for all $p \leq i \leq q$,
\item[(3)] $|(I_i \setminus \{v_{ir}\}) \cup \{v_{fr}\} \Delta (I_{i+1} \setminus \{v_{ir}\}) \cup \{v_{fr}\}| = 1$ for all $p \leq i < q$, and
\item[(4)] $k - 1 \leq |(I_i \setminus \{v_{ir}\}) \cup \{v_{fr}\}| \leq k$ for all $p \leq i \leq q$.
\end{itemize}
It is not hard to see that if there exists no $i$, $p < i \leq q$,
such that $\sigma'_s$ adds a vertex in $N[v_{fr}]$ at position $i$, then
all four conditions hold. If there exists such a position, we
will modify $\sigma'_s$ into yet another subsequence $\sigma''_s$
by finding a new vertex to represent $v_{ir}$. The length of $\sigma''_s$
will be one greater than the length of $\sigma'_s$.

We let $i$, $p < i \leq q$, be the first position in $\sigma'_s$
at which a vertex in
$u \in N[v_{fr}]$ (possibly equal to $v_{fr}$) is added.
Using the same arguments discussed to find $v_{fr}$, and since
we constructed a sunflower with $2k$ petals, we can find another
vertex $v'_{fr}$ such that $N[v_{fr}] \cap I_{i - 1} = \emptyset$.
This new vertex will represent $v_{ir}$ instead of $v_{fr}$.
We construct $\sigma''_s$ from $\sigma'_s$ as follows:
$\sigma''_s = \langle I_p \setminus \{v_{ir}\} \cup \{v_{fr}\}, \ldots,
I_{i - 1} \setminus \{v_{ir}\} \cup \{v_{fr}\}, I_{i - 1} \setminus \{v_{ir}\} \cup \{v'_{fr}\}, I_{i} \setminus \{v_{ir}\} \cup \{v'_{fr}\},
\ldots, I_q \setminus \{v_{ir}\} \cup \{v'_{fr}\}\rangle$.
If $\sigma''_s$ now satisfies all four conditions then we are done.
Otherwise, we repeat the same process (which can occur at most $q - p$ times)
until we reach such a subsequence.
\qed
\end{proof}

\begin{theorem}\label{theorem-degenerate}
{\sc ISR} on $d$-degenerate graphs is fixed-parameter tractable
parameterized by $k + d$.
\end{theorem}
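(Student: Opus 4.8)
The plan is to combine Lemma~\ref{lemma:low-degree} with Proposition~\ref{fact:bound-on-high} along the two-stage scheme already sketched: first repeatedly delete irrelevant low-degree vertices until only few remain, then solve the bounded-size residual instance by brute force.

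First I would describe the reduction loop. Given an instance $(G, I_s, I_t, k)$ with $G$ being $d$-degenerate, let $B$ be the set of vertices in $V(G)\setminus\{I_s\cup I_t\}$ of degree at most $2d$. While $|B| > (2d+1)!(2k-1)^{2d+1}$, apply Lemma~\ref{lemma:low-degree} to obtain an irrelevant vertex $v\in V(G)\setminus\{I_s\cup I_t\}$ and replace $G$ by $G' = G - v$. I need this step to be legitimately iterable: since $v\notin I_s\cup I_t$, both $I_s$ and $I_t$ are still $k$-independent sets of $G'$, so $(G', I_s, I_t, k)$ is again a valid {\sc ISR} instance; degeneracy is hereditary, so $G'$ is still $d$-degenerate; and by the lemma $(G,I_s,I_t,k)$ is a yes-instance if and only if $(G',I_s,I_t,k)$ is. Each iteration removes exactly one vertex, so the loop halts after at most $n$ rounds, and in each round the sunflower of Theorem~\ref{theorem:sunflower}, hence $v$, is computed in time polynomial in $n$ and $k$; thus the whole first stage costs $\mathrm{poly}(n,k)$ time.

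When the loop terminates, $|B| \le (2d+1)!(2k-1)^{2d+1}$. Since the set $S_1$ of vertices of $G$ of degree at most $2d$ satisfies $S_1 \subseteq B \cup I_s \cup I_t$, we get $|S_1| \le (2d+1)!(2k-1)^{2d+1} + 2k < s$, where $s := (2d+1)!(2k-1)^{2d+1} + 2k + 1$. Proposition~\ref{fact:bound-on-high} then yields $|V(G)| \le (2d+1)s =: N(k,d)$, a bound depending only on $k$ and $d$. At this point the reconfiguration graph $R_{{\textsc{is}}}(G,k-1,k)$ has at most $\binom{N(k,d)}{k} + \binom{N(k,d)}{k-1}$ nodes, so I would enumerate all independent sets of $G$ of size $k$ and $k-1$, build the adjacency relation (two sets adjacent iff they differ in exactly one vertex), and run a breadth-first search from $I_s$ to decide whether $I_t$ is reachable. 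This takes time bounded by a function of $k+d$ only.

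Combining the two stages gives a decision procedure running in $f(k+d)\cdot n^{O(1)}$ time, which establishes fixed-parameter tractability parameterized by $k+d$. The only points needing care are that the reduction keeps the instance inside the class (hereditariness of degeneracy, together with $v\notin I_s\cup I_t$) and that the loop terminates — each round drops $|V(G)|$ by one, even though $|B|$ itself need not shrink, since deleting a vertex may turn former high-degree vertices into low-degree ones. Neither is a genuine obstacle; the real substance lies in Lemma~\ref{lemma:low-degree}, which is already available.
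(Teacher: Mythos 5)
Your proof is correct and follows essentially the same two-stage approach as the paper: iterate Lemma~\ref{lemma:low-degree} to delete irrelevant low-degree vertices, then invoke Proposition~\ref{fact:bound-on-high} to bound the residual graph's size in terms of $k$ and $d$, and finish by brute force. Your bookkeeping is in fact slightly cleaner than the paper's (which applies the proposition to the induced subgraph $G'[V(G')\setminus\{I_s\cup I_t\}]$ rather than, as you do, to $G'$ itself via $S_1\subseteq B\cup I_s\cup I_t$), and your explicit checks of termination and hereditariness are welcome but not a different argument.
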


\begin{proof}
For an instance $(G, I_s, I_t, k)$ of {\sc ISR}, we know from Lemma~\ref{lemma:low-degree}
that as long as $V(G) \setminus \{I_s \cup I_t\}$ contains more than $(2d + 1)!(2k - 1)^{2d + 1}$ vertices
of degree at most $2d$ we can find an irrelevant vertex and reduce the size of the graph.
After exhaustively reducing the graph to obtain $G'$, we known that $G'[V(G') \setminus \{I_s \cup I_t\}]$,
which is also $d$-degenerate, has at most $(2d + 1)!(2k - 1)^{2d + 1}$ vertices of degree at most $2d$.
Hence, applying Proposition~\ref{fact:bound-on-high}, we know that
$|V(G') \setminus \{I_s \cup I_t\}| \leq (2d + 1)(2d + 1)!(2k - 1)^{2d + 1}$ and
$|V(G')| \leq (2d + 1)(2d + 1)!(2k - 1)^{2d + 1} + 2k$.
\qed
\end{proof}

\subsection{Nowhere-dense graphs}\label{subsec:nowheredense-is}
Nesetril and Ossona de Mendez~\cite{NM10} showed an interesting
relationship between nowhere-dense classes and a property of classes of structures
introduced by Dawar~\cite{DAWAR07,DAWAR10} called {\em quasi-wideness}.
We will use quasi-wideness and show a rather interesting relationship
between {\sc ISR} on graphs of bounded
degeneracy and nowhere-dense graphs. That is, our algorithm for nowhere-dense graphs
will closely mimic the previous algorithm in the following sense.
Instead of using the sunflower lemma to find a large sunflower, we will use
quasi-wideness to find a ``large enough almost sunflower'' with an initially ``unknown'' core
and then use structural properties of the graph to find this core and complete the sunflower.
We first state some of the results that we need.
Given a graph $G$, a set $S \subseteq V(G)$ is called {\em r-scattered}
if $N^r_G(u) \cap N^r_G(v) = \emptyset$ for all distinct $u, v \in S$.

\begin{proposition}\label{fact:scattered-flower}
Let $G$ be a graph and let $S = \{s_1, s_2, ..., s_k\} \subseteq V(G)$ be a
$2$-scattered set of size $k$ in $G$. Then the closed neighborhoods
of the vertices in $S$ form a sunflower with $k$ petals and an empty core.
\end{proposition}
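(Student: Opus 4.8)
The plan is to verify the statement directly against the definition of a sunflower recalled just before Theorem~\ref{theorem:sunflower}. Recall that a family $S_1,\ldots,S_k$ is a sunflower with $k$ petals and core $Y$ exactly when $S_i \cap S_j = Y$ for all $i \neq j$ and every petal $S_i \setminus Y$ is nonempty; for an empty core $Y = \emptyset$ this reduces to the assertion that the $S_i$ are pairwise disjoint and individually nonempty (in particular they are then automatically pairwise distinct, giving $k$ genuine petals). The relevant sets here are $N_G[s_1], \ldots, N_G[s_k]$, and I would show these $k$ sets are nonempty and pairwise disjoint, so that they form a sunflower with empty core whose $i$-th petal is $N_G[s_i] \setminus \emptyset = N_G[s_i]$.

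Nonemptiness is immediate, since $s_i \in N_G[s_i]$ for every $i$. For disjointness, fix distinct indices $i$ and $j$ and suppose some vertex $w$ lay in $N_G[s_i] \cap N_G[s_j]$. Then $dist_G(w, s_i) \leq 1$ and $dist_G(w, s_j) \leq 1$, hence $w \in N^1_G[s_i] \cap N^1_G[s_j] \subseteq N^2_G[s_i] \cap N^2_G[s_j]$, which contradicts the hypothesis that $S$ is $2$-scattered (alternatively, the triangle inequality yields $dist_G(s_i,s_j)\leq 2$, so $s_i$ itself witnesses the non-disjointness of the radius-$2$ neighborhoods). Therefore $N_G[s_i] \cap N_G[s_j] = \emptyset$ for all $i \neq j$, and together with nonemptiness this is exactly a sunflower with $k$ petals and empty core.

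I expect no genuine obstacle here: the entire content lies in lining up the definition of a $2$-scattered set with that of a sunflower. The only point deserving a moment's care is purely a matter of convention --- the radius-$r$ neighborhood appearing in the definition of ``$r$-scattered'' is read as the closed ball $N^r_G[\cdot]$ from the preliminaries, so that disjointness of the radius-$1$ balls, and hence of the closed neighborhoods, follows. In fact $1$-scatteredness would already suffice for this proposition; the statement is phrased with $2$-scattered sets because that is the form in which the quasi-wideness machinery of Section~\ref{subsec:nowheredense-is} produces such sets.
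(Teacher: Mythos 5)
Your proof is correct and is exactly the direct definitional verification one would give; the paper in fact states Proposition~\ref{fact:scattered-flower} without any proof, treating it as immediate, so there is no divergence to report. Your side remarks --- that the paper's $N^r_G(\cdot)$ in the definition of $r$-scattered should be read as the closed ball $N^r_G[\cdot]$, and that $1$-scatteredness already suffices --- are both accurate.
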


\begin{definition}
A class $\mathscr{C}$ of graphs is {\em uniformly quasi-wide}
with {\em margin} $s_\mathscr{C} : \mathbb{N} \rightarrow \mathbb{N}$
and $N_\mathscr{C} : \mathbb{N} \times \mathbb{N} \rightarrow \mathbb{N}$ if for
all $r, k \in \mathbb{N}$, if $G \in \mathscr{C}$ and $W \subseteq V(G)$ with $|W| > N_\mathscr{C}(r, k)$,
then there is a set $S \subseteq W$ with $|S| < s_\mathscr{C}(r)$, such that $W$ contains
an $r$-scattered set of size at least $k$ in $G[V(G) \setminus S]$.
$\mathscr{C}$ is {\em effectively uniformly quasi-wide} if $s_\mathscr{C}(r)$ and
$N_\mathscr{C}(r, k)$ are computable.
\end{definition}

Examples of effectively uniformly quasi-wide classes include graphs
of bounded degree with margin $1$ and $H$-minor-free graphs with margin $|V(H)| - 1$.

\begin{theorem}[\cite{DK13}]
A class $\mathscr{C}$ of graphs is effectively
nowhere-dense if and only if $\mathscr{C}$ is effectively uniformly quasi-wide.
\end{theorem}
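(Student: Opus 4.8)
\noindent
The plan is to establish the two implications separately, starting with the elementary direction ``effectively uniformly quasi-wide $\Rightarrow$ effectively nowhere-dense''. Assume $\mathscr{C}$ is effectively uniformly quasi-wide with computable margins $s_{\mathscr{C}}$ and $N_{\mathscr{C}}$. Fix $d \ge 0$, put $\rho := 2d+2$ and $t := N_{\mathscr{C}}(\rho,2) + s_{\mathscr{C}}(\rho) + 4$; this $t$ is computable, and I claim $H_d := K_t$ works, i.e.\ $K_t \not\preceq^d_m G$ for every $G \in \mathscr{C}$. Suppose not: let pairwise-disjoint branch sets $\{G_v : v \in V(K_t)\}$ of radius $\le d$, with centres $c_v \in V(G_v)$, witness $K_t \preceq^d_m G$, and set $W := \{c_v : v \in V(K_t)\}$. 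For any three distinct $u,m,v$, gluing a path of length $\le d$ in $G_u$, a connecting edge, a path of length $\le d$ in $G_m$, a connecting edge, and a path of length $\le d$ in $G_v$ produces a walk from $c_u$ to $c_v$ through $c_m$ of length at most $4d+2 < 2\rho$, contained in $G_u \cup G_m \cup G_v$. Apply quasi-wideness to $W$ with radius $\rho$ and $k = 2$: we get $S \subseteq W$ with $|S| < s_{\mathscr{C}}(\rho)$ and two distinct centres $c_u, c_v \in W \setminus S$ with disjoint closed $\rho$-neighbourhoods in $G - S$, hence $dist_{G-S}(c_u, c_v) > 2\rho$. But $S$ consists only of centres, so it meets $G_u \cup G_m \cup G_v$ at most in $\{c_u, c_m, c_v\}$; since $|S| < s_{\mathscr{C}}(\rho) < t - 3$ we may choose $m$ with $c_m \notin S$, and then the walk above lies in $G - S$, so $dist_{G-S}(c_u, c_v) < 2\rho$ --- a contradiction. (The hypothesis $S \subseteq W$ is essential: were $S$ allowed inside a branch set, a single deleted vertex could destroy all detours through third centres at once.)

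\textbf{The converse, ``effectively nowhere-dense $\Rightarrow$ effectively uniformly quasi-wide''}, is the substantial direction; here I would recapitulate the proof of Ne\v{s}et\v{r}il and Ossona de Mendez that nowhere-dense classes are uniformly quasi-wide, reading off that every constant it produces is computable from the given map $d \mapsto H_d$. The plan is an induction on the radius $r$; for transparency I would first prove the variant in which the stabiliser $S$ may be any subset of $V(G)$ of size $\le \ell(r)$, with $\ell$ depending on $r$ alone, and arrange $S \subseteq W$ (as the definition requires) as a separate technical step, as in the original argument. For $r = 1$: given a huge $W$, Ramsey's theorem applied to the adjacency $2$-colouring of pairs from $W$ returns either a clique of order $|V(H_0)|$ in $G[W]$ --- impossible, since $H_0$ embeds in $K_{|V(H_0)|}$ and $H_0 \not\preceq^0_m G$ forbids it as a subgraph --- or a still-huge independent subset $W'$; inside $W'$ one repeatedly picks a vertex with many neighbours in the surviving part of $W'$, moves it into $S$, and passes to those neighbours, and this recurs only boundedly often, because otherwise the removed vertices, together with the heavily-overlapping neighbourhoods they certify, realise $K_t$ as a depth-$1$ minor for unbounded $t$, contradicting $H_1 \not\preceq^1_m G$; what survives is $1$-scattered. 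The inductive step reduces radius $r$ to radius $1$ by peeling off one unit of radius at a time, each step an instance of the depth-$1$ extraction in an appropriate closure of the residual graph; a step costs a Ramsey-type shrinkage of the candidate set and boundedly many extra deletions, and the running total of deletions stays bounded in $r$ because an unbounded stabiliser would again encode a clique minor of unbounded order at depth $O(r)$, contradicting $H_{O(r)} \not\preceq^{O(r)}_m G$. Finally, each Ramsey number invoked and each per-radius deletion bound is an explicit function of $r$ and of the orders of $H_0, \dots, H_{O(r)}$, hence computable; this bookkeeping is what upgrades the classical statement to its effective form.

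\textbf{The main obstacle} is the inductive step of the converse: proving that the number of vertices one must delete to expose an $r$-scattered set depends on $r$ only --- not on the target size $k$, not on $|W|$ --- and that distances in the residual graph stay controlled as one passes between radius scales. This is precisely where the full strength of nowhere-density is used: bounded expansion would suffice for any single fixed $r$, but making the iterative deletion terminate after boundedly many rounds requires the clique bound on $r$-shallow minors for \emph{every} $r$. Once termination is secured, carrying computability of the constants through the Ramsey estimates is routine.
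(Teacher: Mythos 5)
This theorem is imported by the paper from Dawar and Kreutzer~\cite{DK13}; the paper gives no proof of its own, so there is nothing in-text to compare your argument against, and your attempt has to stand on its own.

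Your first direction (effectively uniformly quasi-wide $\Rightarrow$ effectively nowhere-dense) is essentially complete and correct. With $\rho = 2d+2$ and $t = N_{\mathscr{C}}(\rho,2) + s_{\mathscr{C}}(\rho) + 4$, the contradiction between a $\rho$-scattered pair in $G-S$ and the detour of length at most $4d+2 < 2\rho$ through a third branch set is sound, and you correctly exploit the paper's convention that $S \subseteq W$, so that $S$ can only meet the three relevant branch sets in their centres. One cosmetic slip: the segment inside $G_m$ passes through the centre $c_m$ and so has length at most $2d$, not $d$; your stated total $4d+2$ is the right bound, but the itemized lengths as written sum to $3d+2$.

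The converse, however, is where the theorem lives, and there you have a genuine gap rather than a proof. The passage beginning ``here I would recapitulate the proof of Ne\v{s}et\v{r}il and Ossona de Mendez'' is an announcement of a plan, not an argument: the base case $r=1$ is described only qualitatively (``one repeatedly picks a vertex with many neighbours \ldots and this recurs only boundedly often, because otherwise \ldots''), and the inductive step is reduced to the phrase ``peeling off one unit of radius at a time, each step an instance of the depth-$1$ extraction in an appropriate closure of the residual graph,'' with no definition of that closure, no statement of the inductive invariant, and no verification that the accumulated deletions remain bounded by a function of $r$ alone. You yourself flag exactly this point as ``the main obstacle,'' which is accurate, but flagging the obstacle is not the same as overcoming it. Likewise, the claim that every constant produced is computable from $d \mapsto H_d$ is asserted as ``routine bookkeeping'' without exhibiting the recursion that would make it so; effectiveness is precisely the content Dawar and Kreutzer add to the classical equivalence, so it cannot be waved through. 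As it stands, your write-up proves the easy implication and cites the hard one, which is no more than what the paper already does by referencing \cite{DK13}.
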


\begin{theorem}[\cite{DK13}]\label{theorem:quasi-wide}
Let $\mathscr{C}$ be an effectively nowhere-dense class of graphs and $h$ be the computable
function such that $K_{h(r)} \not\preceq^r_m G$ for all $G \in \mathscr{C}$.
Let $G$ be an $n$-vertex graph in $\mathscr{C}$, $r,k \in \mathbb{N}$, and $W \subseteq V(G)$ with $|W| \geq N(h(r),r,k)$,
for some computable function $N$. Then in $\Oh(n^2)$ time, we can compute a set
$B \subseteq V(G)$, $|B| \leq h(r) - 2$, and a set $A \subseteq W$ such that
$|A| \geq k$ and $A$ is an $r$-scattered set in $G[V(G) \setminus B]$.
\end{theorem}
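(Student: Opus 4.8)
The statement is the effective, quantitative version of uniform quasi‑wideness for nowhere‑dense classes, with the deletion set $B$ pinned down in terms of the excluded depth‑$r$ clique $K_{h(r)}$. The plan is to derive it from the equivalence ``effectively nowhere‑dense $\Leftrightarrow$ effectively uniformly quasi‑wide'' recalled just above, refining that result in two ways: (i) replacing the abstract margin $s_\mathscr{C}(r)$ by the concrete bound $h(r)-2$, and (ii) promoting the existence argument to an $\Oh(n^2)$-time algorithm, along the way extracting a computable function $N(h(r),r,k)$.

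First I would re‑derive the combinatorial heart of uniform quasi‑wideness as an iterative ``homogenize‑and‑peel'' procedure, since that is what makes both refinements accessible. Starting from $W$, color each pair $\{u,v\} \subseteq W$ by $\min(\mathrm{dist}_G(u,v),\, 2r+1)$; by Ramsey's theorem, if $|W|$ is large enough there is a large $W_0 \subseteq W$ on which all pairwise distances coincide, say with common value $\rho$. If $\rho > 2r$ we are already done with $B = \emptyset$ and $A = W_0$. Otherwise $\rho \leq 2r$: fix a shortest ($\rho$-length, hence radius‑$\leq r$) path between each pair of $W_0$ and analyze this family of paths. The dichotomy I want is: either the paths can be thinned to a subfamily, on a constant‑sized subset of $W_0$, whose (half‑)paths are pairwise disjoint and therefore assemble into $K_{h(r)}$ as a depth‑$r$ minor---impossible, since $K_{h(r)} \not\preceq^r_m G$---or some single vertex $b$ lies on so many of these shortest paths that, after adding $b$ to $B$ and passing to a still‑large subset $W_1 \subseteq W_0$, the common pairwise distance in $G - b$ has strictly increased. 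Iterating this, the common distance grows until it exceeds $2r$, at which point we return the accumulated $B$ and the surviving large set as $A$; the minor obstruction is what caps the number of peels, and a careful version of the minor‑extraction argument is exactly what yields the bound $|B| \leq h(r)-2$ (roughly: if more than $h(r)-2$ hubs were forced, the hubs together with a clique's worth of surviving vertices and the short connecting paths would already realize $K_{h(r)}$ at depth $r$).

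For the algorithmic claim I would observe that each ingredient above is constructive. A monochromatic subset for a fixed number of colors is found greedily in time polynomial in $|W| \leq n$; the relevant distances and shortest paths are obtained from a constant number of breadth‑first searches; and the hub $b$ is the vertex that meets the largest number of the recorded shortest paths, identified by a single pass over the $\Oh(n^2)$ recorded paths. Thus each round costs $\Oh(n^2)$, and the number of rounds is bounded by $h(r)$, a constant depending only on $\mathscr{C}$ and $r$, so the whole computation is $\Oh(n^2)$. The function $N(h(r),r,k)$ is obtained by unwinding the per‑round losses---one Ramsey bound composed with one hub‑shrinking factor, iterated $h(r)$ times, with the innermost instance required to still leave at least $k$ vertices; it is an enormous but computable function, precisely because $\mathscr{C}$ is \emph{effectively} nowhere‑dense, i.e. $h$ (and hence the margins $s_\mathscr{C}, N_\mathscr{C}$ of the preceding theorem) is computable.

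The main obstacle is the minor‑extraction step inside the dichotomy: showing that the failure to peel---no vertex lying on too many of the chosen shortest paths---forces a genuine depth‑$r$ copy of $K_{h(r)}$, with branch sets that are pairwise disjoint and each of radius at most $r$. The bound $\rho \leq 2r$ on the selected paths keeps the branch‑set radii under control, but guaranteeing disjointness requires a careful thinning of the path system (likely a second, inner application of Ramsey to the overlap pattern of the paths), and it is exactly this bookkeeping that produces the precise constant $h(r)-2$ rather than a cruder bound. Everything else---the BFS computations, the greedy Ramsey extraction, and the assembly of $N$---is routine once the effectiveness of $h$ is in hand.
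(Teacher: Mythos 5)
This theorem is not proved in the paper at all --- it is imported verbatim from Dawar and Kreutzer \cite{DK13}, so there is no in-paper argument to compare against. Your sketch is a faithful reconstruction of the standard proof from that line of work (Ramsey homogenization of capped pairwise distances, iterated extraction of a hub vertex into $B$, with the excluded depth-$r$ clique minor both supplying the dichotomy and capping $|B|$ at $h(r)-2$ via the ``$m-1$ hubs plus one survivor realize $K_{h(r)}$'' construction), and you correctly isolate the genuinely delicate step, namely thinning the short connecting paths to obtain pairwise disjoint radius-$\leq r$ branch sets. One local inaccuracy: deleting a hub $b$ that lies on many of the \emph{chosen} shortest paths does not force the common pairwise distance in $G-b$ to strictly increase, since other shortest paths of the same length may avoid $b$; the iteration terminates not because the homogeneous distance grows past $2r$ in at most $2r$ rounds, but solely because more than $h(r)-2$ hubs would yield the forbidden depth-$r$ minor --- which you do also state, so the architecture of the argument survives this slip.
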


\begin{lemma}\label{lemma:almost-flower}
Let $\mathscr{C}$ be an effectively nowhere-dense class of graphs and $h$ be the computable
function such that $K_{h(r)} \not\preceq^r_m G$ for all $G \in \mathscr{C}$.
Let $(G, I_s, I_t, k)$ be an instance of {\sc ISR} where $G \in \mathscr{C}$
and let $R$ be the set of vertices in $V(G) \setminus \{I_s \cup I_t\}$.
Moreover, let $\mathscr{P} = \{P_1, P_2, \ldots\}$ be a family of sets which partitions $R$
such that for any two distinct vertices $u,v \in R$, $u,v \in P_i$ if and only if
$N_{G}(u) \cap \{I_s \cup I_t\} = N_{G}(v) \cap \{I_s \cup I_t\}$.
If there exists a set $P_i \in \mathscr{P}$ such that $|P_i| > N(h(2),2, 2^{h(2) + 1} k)$, for some computable function $N$,
then there exists an irrelevant vertex $v \in V(G) \setminus \{I_s \cup I_t\}$ such that
$(G, I_s, I_t, k)$ is a yes-instance if and only if $(G', I_s, I_t, k)$ is a yes-instance,
where $G'$ is obtained from $G$ by deleting $v$ and all edges incident on $v$.
\end{lemma}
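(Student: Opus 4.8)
The plan is to follow the proof of Lemma~\ref{lemma:low-degree} almost line for line, with quasi-wideness (Theorem~\ref{theorem:quasi-wide}) playing the role of the sunflower lemma: instead of pulling a sunflower directly out of a large family of closed neighbourhoods, we first extract from a large part $P_i$ a big $2$-scattered set together with a small deletion set $B$, and then recover a genuine sunflower whose core lives inside $B$.

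First I would pick a part $P_i \in \mathscr{P}$ with $|P_i| > N(h(2),2,2^{h(2)+1}k)$ and apply Theorem~\ref{theorem:quasi-wide} with $r = 2$, $W = P_i$, and parameter $2^{h(2)+1}k$. This yields a set $B \subseteq V(G)$ with $|B| \leq h(2)-2$ and a set $A \subseteq P_i$ with $|A| \geq 2^{h(2)+1}k$ that is $2$-scattered in $G[V(G)\setminus B]$. By Proposition~\ref{fact:scattered-flower} applied inside $G[V(G)\setminus B]$, the sets $N_G[u]\setminus B = N_{G[V(G)\setminus B]}[u]$, for $u \in A$, are pairwise disjoint; they may, however, disagree on $B$. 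Since $|B|\leq h(2)-2$, there are at most $2^{h(2)-2}$ possible traces $N_G(u)\cap B$, so by pigeonhole some $A'\subseteq A$ with $|A'| \geq 2^{h(2)+1}k/2^{h(2)-2} = 8k$ has a common trace $B' := N_G(u)\cap B$; and since $A'\subseteq P_i$, its vertices also share a common neighbourhood inside $I_s\cup I_t$.

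Next I would verify that $\{\,N_G[u]\mid u\in A'\,\}$ is a sunflower with core $B'$ and at least $8k\geq 2k$ petals: for $u\in A'$ we have $N_G[u]\cap B = B'$ and $N_G[u]\setminus B = N_{G[V(G)\setminus B]}[u]$, so for distinct $u,v\in A'$ the $2$-scatteredness gives $N_G[u]\cap N_G[v] = B'$, and each petal $N_G[u]\setminus B'\ni u$ is non-empty. From here the argument of Lemma~\ref{lemma:low-degree} applies verbatim to this sunflower: designate any $v_{ir}\in A'$ as the candidate irrelevant vertex; given a reconfiguration sequence $\sigma$ from $I_s$ to $I_t$ and a maximal block $\langle I_p,\dots,I_q\rangle$ throughout which $v_{ir}$ is present, observe that $I_j\cap B' = \emptyset$ for all $p\leq j\leq q$ (a vertex of $B'$ would be a neighbour of $v_{ir}\in I_j$), so each of the $\leq k$ vertices of $I_p$ blocks at most one petal; hence there is some $v_{fr}\in A'$ with $v_{fr}\notin N_G[I_p]$ that can replace $v_{ir}$ throughout the block, and whenever a later step of the block first introduces a vertex of $N_G[v_{fr}]$ the same count produces a fresh representative, at the cost of only a bounded increase in length. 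Since $v_{ir}\notin I_s\cup I_t$, deleting $v_{ir}$ and its incident edges gives the required graph $G'$.

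The main obstacle is purely quantitative: the deletion set $B$ returned by quasi-wideness destroys the ``empty core'' guarantee of Proposition~\ref{fact:scattered-flower}, and pigeonholing over the $\leq 2^{h(2)-2}$ traces on $B$ is exactly why we demand a scattered set of size $2^{h(2)+1}k$ rather than the $2k$ that a clean sunflower would need; the surplus factor must be large enough that, after this loss and after discarding the $\leq k$ petals blocked by $I_p$ (and its analogues at each re-choice), at least $2k$ petals still remain — which $2^{h(2)+1}k$ comfortably guarantees. The only step of the Lemma~\ref{lemma:low-degree} argument that needs re-checking is that the swap still works with a non-empty core, and this causes no difficulty precisely because $B'$ is disjoint from every solution occurring in the block, so core vertices are never touched and the petal-counting is unaffected.
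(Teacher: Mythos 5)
Your proof is correct and follows essentially the same route as the paper: apply quasi-wideness with $r=2$ to the large part $P_i$ to get a small deletion set $B$ and a large $2$-scattered set, pigeonhole on traces over $B$ to recover a genuine sunflower with $\geq 2k$ petals, and then rerun the irrelevant-vertex argument of Lemma~\ref{lemma:low-degree}. The only (immaterial) difference is that you apply Theorem~\ref{theorem:quasi-wide} to $G$ rather than to $G[R]$, so your core sits inside $B$ alone instead of $B \cup I_s \cup I_t$; the petal-counting and replacement argument are unaffected.
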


\begin{proof}
By construction, we known that the family $\mathscr{P}$ contains at most $4^k$ sets,
as we partition $R$ based on their neighborhoods in $I_s \cup I_t$. Note that some
vertices in $R$ have no neighbors in $I_s \cup I_t$ and will therefore belong to the same set in $\mathscr{P}$.

Assume that there exists a $P \in \mathscr{P}$ such that $|P| > N(h(2),2, 2^{h(2) + 1} k)$.
Consider the graph $G[R]$. By Theorem~\ref{theorem:quasi-wide}, we can, in $\Oh(|R|^2)$ time, compute a set
$B \subseteq R$, $|B| \leq h(2) - 2$, and a set $A \subseteq P$ such that
$|A| \geq 2^{h(2) + 1} k$ and $A$ is a $2$-scattered set in $G[R \setminus B]$.
Now let $\mathscr{P}' = \{P'_1, P'_2, \ldots\}$ be a family of sets which partitions $A$
such that for any two distinct vertices $u,v \in A$, $u,v \in P'_i$ if and only if
$N_{G}(u) \cap B = N_{G}(v) \cap B$. Since $|A| \geq 2^{h(2) + 1} k$ and $|\mathscr{P}'| \leq 2^{h(2)}$, we
know that at least one set in $\mathscr{P}'$ will contain at least $2k$ vertices of $A$.
Denote these $2k$ vertices by $A'$. All vertices in $A'$ have the same neighborhood
in $B$ and the same neighborhood in $I_s \cup I_t$ (as all vertices in $A'$ belonged to the same set $P \in \mathscr{P}$).
Moreover, $A'$ is a $2$-scattered set in $G[R \setminus B]$.
Hence, the sets $\{N_G[a'_1], N_G[a'_2], \ldots, N_G[a'_{2k}]\}$, i.e. the closed neighborhoods
of the vertices in $A'$,
form a sunflower with $2k$ petals (Proposition~\ref{fact:scattered-flower});
the core of this sunflower is contained in $B \cup I_s \cup I_t$.
Using the same arguments as we did in the proof of Lemma~\ref{lemma:low-degree}, we
can show that there exists at least one irrelevant vertex $v \in V(G) \setminus \{B \cup I_s \cup I_t\}$.
\qed
\end{proof}

\begin{theorem}
{\sc ISR} restricted to any effectively nowhere-dense class $\mathscr{C}$ of graphs
is fixed-parameter tractable parameterized by $k$.
\end{theorem}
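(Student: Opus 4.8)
The plan is to mirror the proof of Theorem~\ref{theorem-degenerate} essentially verbatim, using the quasi-wideness based reduction of Lemma~\ref{lemma:almost-flower} in place of the sunflower based reduction of Lemma~\ref{lemma:low-degree}. Fix the class $\mathscr{C}$. Since $\mathscr{C}$ is effectively nowhere-dense, the function $h$ with $K_{h(r)} \not\preceq^r_m G$ for all $G \in \mathscr{C}$ is computable, so $h(2)$ is a constant depending only on $\mathscr{C}$, and the threshold $N(h(2),2,2^{h(2)+1}k)$ appearing in Lemma~\ref{lemma:almost-flower} is a computable function of $k$ alone; call it $\beta(k)$.

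Given an instance $(G, I_s, I_t, k)$ with $G \in \mathscr{C}$, we first compute the partition $\mathscr{P}$ of $R = V(G) \setminus \{I_s \cup I_t\}$ that groups vertices by their neighborhood in $I_s \cup I_t$; since $|I_s \cup I_t| \leq 2k$ we have $|\mathscr{P}| \leq 4^k$. As long as some block $P_i \in \mathscr{P}$ satisfies $|P_i| > \beta(k)$, we invoke Lemma~\ref{lemma:almost-flower} to obtain an irrelevant vertex $v \in V(G) \setminus \{I_s \cup I_t\}$, delete $v$ together with its incident edges, and recompute $\mathscr{P}$. Each iteration strictly decreases $|V(G)|$ and preserves the answer (by Lemma~\ref{lemma:almost-flower}) while never touching $I_s$ or $I_t$, so the loop terminates after at most $n$ rounds; each round costs $\Oh(n^2)$ by Theorem~\ref{theorem:quasi-wide} plus polynomial overhead, so the reduction phase runs in polynomial time. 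Two technical points deserve care: after a deletion the graph is an induced subgraph of the original, so to keep Theorem~\ref{theorem:quasi-wide} applicable we should work inside the subgraph closure $\overline{\mathscr{C}}$ of $\mathscr{C}$, which is again effectively nowhere-dense with the same function $h$; and the constant $h(2)$ must be treated as part of the parameter-only blow-up, which is legitimate precisely because $\mathscr{C}$ is fixed and effectively nowhere-dense.

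When the loop halts, every block of $\mathscr{P}$ has size at most $\beta(k)$, hence $|R| \leq 4^k \beta(k)$ and $|V(G)| \leq 4^k \beta(k) + 2k =: g(k)$, a computable function of $k$ only. On a graph with at most $g(k)$ vertices the reconfiguration graph $R_{{\textsc{is}}}(G, k-1, k)$ has at most $2 g(k)^k$ nodes and a number of edges bounded by a function of $k$; we build it explicitly and run a breadth-first search from $I_s$ to decide whether $I_t$ is reachable. This step takes time bounded by a function of $k$, and composed with the polynomial-time reduction phase it yields an $f(k)\cdot n^{\Oh(1)}$ algorithm, proving the theorem.

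The main obstacle is not a calculation but verifying that the two sources of blow-up compose into a genuine function of $k$, and in particular that Lemma~\ref{lemma:almost-flower}'s choice of a $2$-scattered set of size $2^{h(2)+1}k$ is exactly what makes the rerouting go through: the set $B$ of apex vertices returned by Theorem~\ref{theorem:quasi-wide} has size only $h(2)-2$, and the factor $2^{h(2)}$ absorbs the ways the candidate petals can attach to $B$, leaving $2k$ true petals sharing a core inside $B \cup I_s \cup I_t$ — precisely the configuration needed to rerun the argument of Lemma~\ref{lemma:low-degree}. Beyond this bookkeeping of the constants attached to $\mathscr{C}$, the proof is the same ``delete irrelevant vertices until the instance is small, then brute force'' scheme used in the bounded-degeneracy case.
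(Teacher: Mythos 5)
Your proposal is correct and follows essentially the same approach as the paper: partition $V(G) \setminus \{I_s \cup I_t\}$ into at most $4^k$ classes by neighborhood in $I_s \cup I_t$, repeatedly apply Lemma~\ref{lemma:almost-flower} to delete irrelevant vertices while some class exceeds $N(h(2),2,2^{h(2)+1}k)$, and brute-force the resulting instance of size at most $2k + 4^k N(h(2),2,2^{h(2)+1}k)$. Your added remarks on closure under subgraphs and on termination are sound elaborations of details the paper leaves implicit.
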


\begin{proof}
If after partitioning $V(G) \setminus \{I_s \cup I_t\}$ into at most $4^k$ sets
the size of every set $P \in \mathscr{P}$ is bounded by $N(h(2),2, 2^{h(2) + 1} k)$, then
we can solve the problem by exhaustive enumeration, as $|V(G)| \leq 2k + 4^k N(h(2),2, 2^{h(2) + 1} k)$.
Otherwise, we can apply Lemma~\ref{lemma:almost-flower} and reduce the size of the graph in polynomial time.
\qed
\end{proof}

\section{Dominating set reconfiguration}

\subsection{\WONE-hardness}
The \WONE-hardness of the {\sc DSR} problem can be shown using only minor modifications
to the standard parameterized reduction from {\sc IS} to {\sc DS}. That is,
instead of reducing from {\sc IS} to {\sc DS}, we can instead give
a reduction from {\sc ISR} to {\sc DSR}.
We include a proof for completeness.

\begin{theorem}\label{theorem:dom-hard}
{\sc DSR} parameterized by $k$ is \WONE-hard on general graphs.
\end{theorem}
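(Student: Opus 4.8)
The plan is to mimic the classical parameterized reduction from \ISFULL{} to \DSFULL{} at the level of reconfiguration sequences. Given an instance $(G, I_s, I_t, k)$ of \textsc{ISR}, I would build a graph $G'$ as follows: keep all vertices of $G$; for every edge $uv \in E(G)$ add a new ``edge vertex'' $w_{uv}$ adjacent to both $u$ and $v$; and add one extra ``guard'' vertex $z$ adjacent to all of the original vertices of $V(G)$ (and possibly a pendant or a small clique attached to $z$ to force $z$ to be permanently in every dominating set of the relevant size). The source and target dominating sets would be $D_s = I_s \cup \{z\}$ and $D_t = I_t \cup \{z\}$, and the new parameter would be $k' = k + 1$. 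The standard fact is that a size-$k$ subset $X \subseteq V(G)$ together with $z$ dominates $G'$ if and only if $X$ is a dominating set of $G'$-minus-$z$ restricted appropriately, which works out to $X$ being an independent set of $G$ of size exactly $k$ — the edge vertices $w_{uv}$ force that no two chosen vertices are adjacent (if both endpoints were chosen we would still be fine for domination, so actually the independence has to be enforced by a counting/size argument: choosing $k$ vertices that dominate all $m$ edge-vertices while $z$ dominates the originals forces each edge-vertex to have a chosen neighbor, and with only $k$ vertices this is possible precisely when the $k$ vertices form a maximal-ish structure). I would use the same gadget as Ito et al.\ or the textbook reduction, whichever makes the equivalence of feasible solutions cleanest.

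The core of the proof is then the two-directional simulation. First I would show that any reconfiguration sequence $\langle I_s = I_0, I_1, \ldots, I_\ell = I_t\rangle$ in $R_{\textsc{is}}(G, k-1, k)$ lifts to a reconfiguration sequence in $R_{\textsc{ds}}(G', k', k'+1)$ by simply taking $D_i = I_i \cup \{z\}$ and inserting the token-jump of \textsc{ISR} as an add-then-remove pair (or remove-then-add pair) in \textsc{DSR}, checking that each intermediate set of size $k$ or $k+1$ together with $z$ is still a dominating set of $G'$ — here the guard $z$ does the heavy lifting, since it alone dominates all original vertices, and we only need the $I_i$-part to dominate the $m$ edge-vertices, which requires a short argument about what happens at the intermediate step where one token is in transit. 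Conversely, and this is the more delicate direction, I would argue that any reconfiguration sequence between $D_s$ and $D_t$ in $R_{\textsc{ds}}(G', k', k'+1)$ can be assumed (after a normalization argument) to keep $z$ untouched and to never use an edge-vertex $w_{uv}$: using an edge-vertex is wasteful because it dominates only $\{u, v, w_{uv}\}$ whereas using $u$ (or $v$) dominates a superset, so an exchange argument replaces any $w_{uv}$ in a dominating set by one of its endpoints while preserving feasibility and the size bounds; once $z$ is fixed and edge-vertices are banned, the remaining $k$-subset of $V(G)$ must at all times dominate all edge-vertices, which — by the counting argument built into the gadget — forces it to be an independent set of $G$, so the sequence projects down to a valid \textsc{ISR} sequence.

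The step I expect to be the main obstacle is precisely this normalization in the backward direction: making sure that forbidding edge-vertices and freezing $z$ can be done \emph{within} the allowed size window $[k', k'+1]$ of \textsc{DSR}, since the reconfiguration model only permits one extra token of slack. I would need to verify that when the adversary's sequence temporarily parks a token on some $w_{uv}$, I can reroute it to $u$ or $v$ without ever needing two simultaneous extra tokens, and that the guard vertex $z$ genuinely cannot be removed at any point (this is what the pendant/clique attachment at $z$ is for — it makes $z$ the unique dominator of those attached vertices, so every feasible solution of size $\le k'+1$ contains $z$, provided $k$ is large enough, or we handle small $k$ as a trivial base case). A secondary subtlety is the ``exactly $k$ versus at most $k$'' issue: \textsc{ISR} here uses sizes $k-1$ and $k$, so when an \textsc{ISR} configuration has size $k-1$ the corresponding \textsc{DSR} configuration has size $k'-1 = k$, and I must check the gadget still forces independence (and still is dominated) in that case, which it does because a size-$(k-1)$ independent set plus $z$ plus the transit token still covers everything. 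Once these size-accounting checks are in place, the equivalence $(G,I_s,I_t,k) \in \textsc{ISR} \iff (G', D_s, D_t, k') \in \textsc{DSR}$ follows, and since the reduction is clearly polynomial-time with $k' = k+1$, it is an \FPT{} reduction, so \WONE-hardness of \textsc{ISR} on general graphs (from~\cite{IKOSUY14}) transfers to \textsc{DSR}.
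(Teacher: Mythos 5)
Your high-level plan (an \FPT{} reduction from \textsc{ISR} to \textsc{DSR} that simulates reconfiguration sequences step by step) is the right one and matches the paper, but the gadget you chose does not encode independent sets, and this breaks the reduction at its very first step. The construction with edge-vertices $w_{uv}$ adjacent to $\{u,v\}$ plus a guard $z$ dominating $V(G)$ is the classical reduction from \VCFULL{} to \DSFULL{}: with $z$ handling the original vertices, a set $X \subseteq V(G)$ of size $k$ makes $X \cup \{z\}$ dominating if and only if every edge-vertex $w_{uv}$ has a neighbor in $X$, i.e.\ if and only if $X$ is a vertex cover of $G$. An independent set $I_s$ of size $k$ will in general leave almost all edge-vertices undominated, so $D_s = I_s \cup \{z\}$ is not even a feasible solution of the constructed instance. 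You sensed this ("if both endpoints were chosen we would still be fine for domination") but the "counting/size argument" you hope will rescue independence does not exist: dominating all $m$ edge-vertices with $k$ original vertices is exactly the vertex-cover condition, which is the \emph{complement} of independence and is not parameter-preserving from $k$. At best your construction reduces \VCFULL{} reconfiguration (parameterized by the capacity) to \textsc{DSR}, which cannot yield \WONE-hardness since that problem is fixed-parameter tractable.

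The paper instead uses the standard \emph{parameterized} reduction from \ISFULL{} to \DSFULL{}, which is a selection gadget rather than a covering gadget: $G'$ contains $k$ disjoint cliques $C_1,\dots,C_k$, each a copy of $V(G)$; for each $i$ an independent set $F_i$ of $k+2$ vertices all adjacent to $C_i$, forcing any dominating set of size at most $k+1$ to pick a vertex in every clique; and, for every "conflict" pair $(c^i_p, c^j_q)$ with $p=q$ or $v_pv_q \in E(G)$, a set of $k+2$ vertices adjacent to all of $C_i \cup C_j$ \emph{except} $c^i_p$ and $c^j_q$, so that selecting that pair leaves them undominated. Thus size-$k$ dominating sets of $G'$ correspond exactly to $k$-independent sets of $G$, and since $G'$ has no dominating set of size $k-1$, every move in $R_{\textsc{ds}}(G',k,k+1)$ is an add-then-remove pair matching a token jump in $R_{\textsc{is}}(G,k-1,k)$; this also disposes of the normalization difficulties you anticipated (freezing $z$, rerouting tokens off edge-vertices), which simply do not arise. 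To repair your write-up you would need to replace your gadget with one of this multicolored-selection type; the simulation argument you sketched then goes through essentially as the paper states it.
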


\begin{proof}
We let $(G, I_s, I_t, k)$ be an instance of {\sc ISR}, where $V(G) = \{v_1, \ldots, v_n\}$,
$E(G) = \{e_1, \ldots, e_m\}$, $I_s = \{v_{i_1}, \ldots, v_{i_k}\}$, and $I_t = \{v_{j_1}, \ldots, v_{j_k}\}$.
We first construct a graph $G'$ as follows.
$G'$ consists of the disjoint union of $k$ vertex-disjoint cliques
$C_1, \ldots, C_k$, each of size $n$, $k$ vertex-disjoint
independent sets $F_1, \ldots, F_k$, each of size at most $k + 2$, and at most
$n^2k^2$ vertex-disjoint independent sets $R_1, R_2, \ldots$, each of size $k + 2$.
Intuitively, each set $F_i$ will force any dominating set of $G'$ of size $k$ (or $k + 1$) to pick a vertex from each $C_i$
and the ``$R$ sets'' will guarantee that the selected vertices form an independent set in $G$.
Formally, we have:

\begin{itemize}
\item[(1)] For every vertex $v \in V(G)$ there is a corresponding vertex in
each $C_i$, $1 \leq i \leq k$ and we let $C_i = \{c^i_1, \ldots, c^i_n\}$.
\item[(2)] For every $1 \leq i \leq k$, we make the set $C_i$ a clique in $G'$.
\item[(3)] For each set $C_i$, $1 \leq i \leq k$, we introduce a set $F_i$ of $k + 2$ new
independent vertices and add an edge between each vertex in $C_i$ and all vertices in $F_i$.
\item[(4)] For a vertex $c^i_p \in C_i$ and a vertex $c^j_q \in C_j$, $i \neq j$, $1 \leq i,j \leq k$, and $1 \leq p,q \leq n$,
if $p = q$ or $v_pv_q \in E(G)$ we introduce $k + 2$ new
independent vertices and make them adjacent to all vertices
in $C_i \cup C_j \setminus \{c^i_p, c^j_q\}$. In other words, each new vertex
dominates all but two vertices
in $C_i \cup C_j$, namely $c^i_p$ and $c^j_q$.
\end{itemize}

We let $(G', D_s, D_t, k)$ denote the corresponding {\sc DSR} instance, where
$D_s = \{c^1_{i_1}, \ldots, c^k_{i_k}\}$ and $D_t = \{c^1_{j_1}, \ldots, c^k_{j_k}\}$.
Clearly, any dominating set $D$ of $G'$ of size $k$ must pick exactly one vertex
from each $C_i$, $1 \leq i \leq k$, and each such set corresponds to an independent set of
size $k$ in $G$. Moreover, any reconfiguration sequence between
$D_s$ and $D_t$ starts by adding a vertex (since $G'$ has no dominating set of size
$k - 1$) and then removing another (since dominating sets larger than $k + 1$ are
not allowed). By swapping the order of consecutive vertex additions and removals we obtain
a one-to-one correspondence between
reconfiguration sequences of independent sets of $G$ (of size $k$ and $k - 1$) and
reconfiguration sequences (of the same length) between
dominating sets of $G'$ (of size $k$ and $k+1$). The instances are thus equivalent.
\qed
\end{proof}

\subsection{Graphs excluding $K_{d,d}$ as a subgraph}
The parameterized complexity of the {\sc Dominating Set} problem (parameterized by $k$)
on various classes of graphs has been studied extensively in the literature;
the main goal has been to push the tractability frontier as
far as possible. The problem was shown fixed-parameter tractable
on planar graphs by Alber et al.~\cite{ABFN00}, on bounded genus graphs by Ellis et al.~\cite{EFF02},
on $H$-minor-free graphs by Demaine et al.~\cite{DFHT05}, on bounded
expansion graphs by Nesetril and Ossona de Mendez~\cite{NO08}, on nowhere-dense graphs
by Dawar and Kreutzer~\cite{DK13}, on degenerate graphs by Alon and Gutner~\cite{AG09},
and finally on $K_{d,d}$-free graphs by Philip et al.~\cite{PRS09} and Telle and Villanger~\cite{TV12}.
Figure~\ref{fig-graph-classes} illustrates the inclusion relationship among these classes of graphs,
which all fall under the category of sparse graphs.
Our fixed-parameter tractable algorithm relies on many of these earlier results.
Interestingly, and since the class of $K_{d,d}$-free graphs includes all those other graph classes,
our algorithm (Theorem~\ref{theorem:ds-fpt}) implies that the diameter of the reconfiguration graph $R_{{\textsc{ds}}}(G, k, k + 1)$
(or of its connected components), for $G$ in any of the aforementioned classes, is bounded above by $f(k,c)$,
where $f$ is a computable function and $c$ is constant which depends on the graph
class at hand. We start with some definitions and known results.

\begin{definition}[\cite{DDFKLPPRSVS14,SSPRIVATE14,PRS09,TV12}]\label{def:dom-core}
Given a graph $G$, the {\em domination core} of $G$ is a set $C \subseteq V(G)$
such that any set $D \subseteq V(G)$ is a dominating set of $G$ if and only if
$D$ dominates $C$. In other words, $D$ is a dominating set
of $G$ if and only if $C \subseteq N_G[D]$.
\end{definition}

\begin{theorem}[\cite{SSPRIVATE14,PRS09,TV12}]\label{theorem:domination-core}
If $G$ is a graph which excludes $K_{d,d}$ as a subgraph and $G$
has a dominating set of size at most $k$ then
the size of the domination core $C$ of $G$ is at most $dk^d$
and $C$ can be computed in $\Oh^*(dk^d)$ time.
\end{theorem}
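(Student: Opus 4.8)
The plan is to construct the domination core $C$ by the standard iterative-removal scheme and then bound the size of an inclusion-minimal core through a $K_{d,d}$-extraction argument. Concretely, I would start from $C := V(G)$, which is trivially a domination core, and repeatedly look for a \emph{redundant} vertex: a vertex $v \in C$ such that $C \setminus \{v\}$ is still a domination core, equivalently such that no set $D \subseteq V(G) \setminus N_G[v]$ with $|D| \le k$ dominates $C \setminus \{v\}$. Removing such a vertex preserves the core property, so when the process halts we are left with an inclusion-minimal domination core. By minimality, every $v \in C$ comes with a \emph{witness}: a set $D_v$ with $|D_v| \le k$, $C \setminus \{v\} \subseteq N_G[D_v]$, and $D_v \cap N_G[v] = \emptyset$ (so in particular $v \notin N_G[D_v]$). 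It is this family of witnesses that I would exploit to bound $|C|$.

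For the size bound, suppose toward a contradiction that $|C|$ is much larger than $dk^d$. I would build greedily a sequence of vertices $z_1, z_2, \ldots$ together with a shrinking family $C = Y_0 \supseteq Y_1 \supseteq \cdots$ as follows: given $Y_i$ with $|Y_i|$ still large, pick any $v_i \in Y_i$; since $Y_i \setminus \{v_i\} \subseteq C \setminus \{v_i\} \subseteq N_G[D_{v_i}]$ and $|D_{v_i}| \le k$, some $z_{i+1} \in D_{v_i}$ dominates at least $(|Y_i|-1)/k$ vertices of $Y_i \setminus \{v_i\}$; let $Y_{i+1}$ consist of those dominated vertices (excluding $z_{i+1}$ itself), so $Y_{i+1} \subseteq N_G[z_{i+1}]$ and $|Y_{i+1}| \ge (|Y_i|-1)/k - 1$. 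The key point is that the $z_i$'s are pairwise distinct: each $Y_i$ is contained in $N_G[z_j]$ for every $j \le i$, so $v_i \in N_G[z_j]$, hence $z_j \in N_G[v_i]$, hence $z_j \notin D_{v_i}$, so $z_{i+1}$ differs from all earlier $z_j$. Iterating $d$ times keeps $|Y_d| \gtrsim |C|/k^d$, which by the assumption on $|C|$ stays at least $d$. Then $z_1,\ldots,z_d$ are $d$ distinct vertices each adjacent to every vertex of the size-$\ge d$ set $Y_d$ (and disjoint from it by construction), which exhibits $K_{d,d}$ as a subgraph of $G$ -- a contradiction. Hence $|C| = \Oh(dk^d)$, and tracking the additive/rounding losses carefully yields the stated bound $dk^d$.

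For the running time, the issue is to implement the redundancy test efficiently: deciding whether $v \in C$ is redundant amounts to an \emph{annotated} domination query -- does $G - N_G[v]$ contain a set of size $\le k$ dominating $C \setminus \{v\}$? -- which I would resolve using the fixed-parameter algorithms for {\sc Dominating Set} on $K_{d,d}$-free graphs referenced above, applied to the annotated instance (well-behaved since $G$ is promised to have a dominating set of size at most $k$). Rather than shrinking $V(G)$ one vertex at a time, it is cleaner to \emph{grow} a core: start from $C := \emptyset$ and, while some $D$ of size $\le k$ dominates $C$ but not all of $V(G)$, add to $C$ a vertex of $V(G) \setminus N_G[D]$; when this halts $C$ is a domination core, and passing to an inclusion-minimal subcore by a final round of redundancy removals gives one of size $\Oh(dk^d)$. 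The $dk^d$ factor in the overall running time then comes from the number of such steps together with the cost of the domination subroutine.

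I expect the main obstacle to be the combinatorial heart: organizing the greedy extraction of $z_1,\ldots,z_d$ so that distinctness is guaranteed \emph{and} the cumulative losses are small enough to land on the clean bound $dk^d$ rather than merely $\Oh(dk^d)$. A secondary difficulty is verifying that the annotated domination queries used to detect redundant vertices can indeed be answered within the claimed $\Oh^*(dk^d)$ budget on $K_{d,d}$-free graphs. Both of these are exactly the points handled, with more care, in~\cite{SSPRIVATE14,PRS09,TV12}.
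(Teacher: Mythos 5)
The paper offers no proof of this statement---it is imported wholesale from the cited works---so there is nothing internal to compare against; I will judge your reconstruction on its own. Its combinatorial heart is correct and is essentially the argument of Philip et al.\ and Telle--Villanger: from an inclusion-minimal core you get, for each $v\in C$, a witness $D_v$ of size at most $k$ dominating $C\setminus\{v\}$ with $D_v\cap N[v]=\emptyset$; the nested sets $Y_0\supseteq\cdots\supseteq Y_d$ with $Y_{i+1}\subseteq N(z_{i+1})$, the distinctness of the $z_j$ via $z_j\in N[v_i]$ versus $D_{v_i}\cap N[v_i]=\emptyset$, and the resulting $K_{d,d}$ are all sound, and they give $|C|=\Oh(dk^d)$ (the additive losses push the clean constant to roughly $(d+\Oh(1))k^d$, which is immaterial). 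One point in your favour: you silently repaired Definition~\ref{def:dom-core}, which as literally written quantifies over \emph{all} $D\subseteq V(G)$; under that reading the theorem is false (a star $K_{1,n-1}$ is $K_{2,2}$-free and has a dominating set of size $1$, yet any unrestricted core must contain every leaf). The intended and needed notion is the $k$- (or $(k+1)$-)domination core, i.e.\ only dominators of size at most $k+1$ are quantified over, which is exactly what Lemma~\ref{lemma-ds-irr} and Theorem~\ref{theorem:ds-fpt} use.

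The genuine gap is algorithmic. Your size bound is existential---it applies to an inclusion-minimal core---and both computation schemes you sketch (redundancy removal, or growing $C$ from $\emptyset$) hinge on an oracle for the annotated query ``is there $D$ with $|D|\le k$, $D\cap N[v]=\emptyset$, dominating $C\setminus\{v\}$?''. That query is FPT on $K_{d,d}$-free graphs, but this is itself a nontrivial theorem whose running time is far above $\Oh^*(dk^d)$, and you invoke it a polynomial number of times; moreover, for the growing variant you give no bound on the number of iterations before $C$ becomes a core (one can be extracted---the prefix witnesses $D_i$ support the same $Y$-process if you always branch on the last-added vertex of $Y_i$---but you do not say this). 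The cited works sidestep the oracle entirely by making your extraction argument \emph{constructive}: starting from a single dominating set of size at most $k$ (or an approximation thereof), whenever $|C|$ exceeds the threshold they run the $Y_0\supseteq Y_1\supseteq\cdots$ refinement directly on $G$ and exhibit in polynomial time a vertex $z\in C$ such that every small set dominating $C\setminus\{z\}$ also dominates $z$; that vertex is discarded and the process repeats. So: the combinatorics is right, the definition is correctly read, but the claimed computation in $\Oh^*(dk^d)$ time is not actually delivered by your scheme and requires the constructive variant of the same extraction argument.
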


\begin{definition}
A bipartite graph $G$ with bipartition $(A,B)$ is {\em $B$-twinless} if there are no
vertices $u,v \in B$ such that $N(u) = N(v)$.
\end{definition}

\begin{theorem}[\cite{SSPRIVATE14}]\label{theorem:twinless}
If $G$ is a bipartite graph with bipartition $(A,B)$ such that $G$ is $B$-twinless
and excludes $K_{d,d}$ as a subgraph then
\begin{eqnarray*}
|B| \leq 2(d - 1) ({|A|e \over d})^{2d}.
\end{eqnarray*}
\end{theorem}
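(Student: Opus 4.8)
The plan is a Zarankiewicz-type double-counting argument on the neighbourhoods of the vertices of $B$. Write $n=|A|$. Since $G$ is bipartite we have $N_G(b)\subseteq A$ for every $b\in B$, and since $G$ is $B$-twinless the map $b\mapsto N_G(b)$ is an injection of $B$ into the subsets of $A$; so it suffices to bound the number of distinct subsets of $A$ that occur as a neighbourhood. First I would split $B$ into $B_{<d}$, the vertices of degree at most $d-1$, and $B_{\ge d}$, the vertices of degree at least $d$, and bound the two parts by different means.

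For $B_{<d}$, injectivity of $b\mapsto N_G(b)$ together with $|N_G(b)|\le d-1$ gives $|B_{<d}|\le\sum_{i=0}^{d-1}\binom{n}{i}$, which I would estimate using the standard bound $\sum_{i\le t}\binom{n}{i}\le(ne/t)^t$. For $B_{\ge d}$ I would count incidences between vertices of $B_{\ge d}$ and the $d$-element subsets of $A$: let $T=\{(b,S):b\in B_{\ge d},\ S\subseteq N_G(b),\ |S|=d\}$. Every $b\in B_{\ge d}$ contributes at least one pair, so $|B_{\ge d}|\le|T|$. Conversely, if some fixed $d$-set $S\subseteq A$ were contained in the neighbourhoods of $d$ distinct vertices $b_1,\dots,b_d\in B$, then $\{b_1,\dots,b_d\}$ together with $S$ would span a (not necessarily induced) $K_{d,d}$ subgraph of $G$; since $G$ excludes $K_{d,d}$, each $d$-set $S$ lies in at most $d-1$ pairs of $T$. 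Hence $|B_{\ge d}|\le|T|\le(d-1)\binom{n}{d}\le(d-1)(ne/d)^d$. Summing, $|B|$ is at most something of the form $(d-1)(ne/d)^d+(ne/(d-1))^{d-1}$.

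The last step is then purely arithmetic: bound each of these two terms by $(d-1)(ne/d)^{2d}$ so that the total collapses to the stated $2(d-1)(ne/d)^{2d}$. I expect this bookkeeping to be the only fiddly part — in particular $(ne/d)^d\le(ne/d)^{2d}$ needs $ne\ge d$, so I would carry out the estimate in the regime $|A|\ge 2d$, where $\binom{n}{i}$ is non-decreasing for $i\le d$ and all the intermediate bounds line up with room to spare; the complementary regime $|A|<2d$ is degenerate and not the case of interest. All of the structural content of the theorem sits in one place, namely $K_{d,d}$-freeness forcing ``each $d$-subset of $A$ lies in at most $d-1$ neighbourhoods''; everything else is choosing the split threshold and the binomial estimates so the constants match the claimed form.
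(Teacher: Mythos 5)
The paper does not actually prove this statement --- it is imported wholesale from a private communication (\cite{SSPRIVATE14}) --- so there is no in-paper proof to compare against. Your argument is the standard Zarankiewicz-style one for such neighbourhood-diversity bounds and its combinatorial core is correct: twinlessness makes $b\mapsto N_G(b)$ injective, which handles the low-degree vertices via $\sum_{i\le d-1}\binom{n}{i}$, and $K_{d,d}$-freeness forces each $d$-subset of $A$ to lie in at most $d-1$ neighbourhoods, which handles the high-degree vertices via the incidence count $|B_{\ge d}|\le (d-1)\binom{n}{d}$. The closing arithmetic also goes through in the regime you restrict to: for $|A|\ge d$ one has $|A|e/d\ge e>1$, so $(d-1)(ne/d)^d\le (d-1)(ne/d)^{2d}$ and $(ne/(d-1))^{d-1}\le e\,(ne/d)^{d-1}\le (d-1)(ne/d)^{2d}$, giving the stated bound.

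The one place you are too quick is the dismissal of $|A|<2d$ as ``degenerate and not the case of interest.'' Half of it is genuinely trivial: for $d\le |A|<2d$, twinlessness gives $|B|\le 2^{|A|}\le 4^d<e^{2d}\le (|A|e/d)^{2d}\le 2(d-1)(|A|e/d)^{2d}$, so that sub-case closes. But for $|A|<d$ the inequality as stated can simply be \emph{false}: take $d=4$, $A=\{a\}$, and $B$ a single isolated vertex; then $G$ is $B$-twinless and $K_{4,4}$-free with $|B|=1$, while $2(d-1)(|A|e/d)^{2d}=6(e/4)^8\approx 0.27<1$. So no proof can cover that regime --- the theorem implicitly assumes something like $|A|\ge d$ (harmless for the paper's application, where only \emph{some} computable bound in $|A|$ and $d$ is needed). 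You should state that hypothesis explicitly rather than wave the small-$|A|$ case away, since as written your claim to prove the theorem in full would be claiming something false.
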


Since Theorem~\ref{theorem:domination-core} implies a bound on the size of the domination
core and allows us to compute it efficiently, our main concern is to deal with
vertices outside of the core, i.e. vertices in $V(G) \setminus C$.
The next lemma shows that we can in fact find strongly irrelevant vertices outside
of the domination core of a graph.

\begin{lemma}\label{lemma-ds-irr}
For $G$ an $n$-vertex graph, $C$ the domination core of $G$, and $D_s$ and $D_t$
two dominating sets of $G$, if there exist $u,v \in V(G) \setminus \{C \cup D_s \cup D_t\}$
such that $N_G(u) \cap C = N_G(v) \cap C$ then $u$ (or $v$) is strongly irrelevant.
\end{lemma}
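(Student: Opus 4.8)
The plan is to show that if two vertices $u, v$ outside the domination core (and outside $D_s \cup D_t$) have the same neighborhood in $C$, then one of them can be ``shadowed'' by the other in any reconfiguration sequence, yielding a sequence of no greater length that avoids it. Fix $u$ and $v$ with $N_G(u) \cap C = N_G(v) \cap C$; I will argue $v$ is strongly irrelevant. The key observation is that since $C$ is the domination core, a set $D$ is a dominating set of $G$ if and only if $C \subseteq N_G[D]$, and whether a vertex $w \in C$ is dominated by $v$ depends only on whether $w \in N_G[v]$. Because $v \notin C$, we have $N_G[v] \cap C = N_G(v) \cap C = N_G(u) \cap C = N_G[u] \cap C$ (the last equality again using $u \notin C$). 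Hence, from the point of view of dominating the core, $v$ and $u$ are interchangeable: any dominating set containing $v$ remains a dominating set if $v$ is replaced by $u$, provided $u$ is not already in the set.

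First I would take a shortest reconfiguration sequence $\sigma = \langle D_s = D_0, D_1, \ldots, D_\ell = D_t \rangle$ in $R_{\textsc{ds}}(G,k,k+1)$ that touches $v$ (if none exists, there is nothing to prove). Since $v \notin D_s \cup D_t$, the occurrences of $v$ in $\sigma$ come in ``add $\ldots$ remove'' blocks, exactly as in the proof of Lemma~\ref{lemma:low-degree}; it suffices to handle one such maximal block $\langle D_p, \ldots, D_q \rangle$ where $v$ is added at step $p$ and removed at step $q+1$, and $v \in D_i$ for all $p \le i \le q$. The idea is to substitute $u$ for $v$ throughout this block, setting $D_i' = (D_i \setminus \{v\}) \cup \{u\}$ for $p \le i \le q$. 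By the interchangeability observation each $D_i'$ is again a dominating set, the size constraints $k \le |D_i'| \le k+1$ are preserved as long as $u \notin D_i$, and consecutive symmetric differences remain of size one. The only obstruction, just as in Lemma~\ref{lemma:low-degree}, is that $u$ might itself be added somewhere inside the block; but here we have only the single spare vertex $u$ rather than a whole sunflower of petals, so I would instead argue that $u$ cannot be added inside the block in a shortest sequence without contradiction — if $u$ were added at some step $i$ in $[p+1,q]$ and later removed, then in the modified sequence we would already have $u$ present, and we can simply delete the redundant add/remove pair for $u$, strictly shortening the sequence and contradicting minimality (after re-splicing). A cleaner route: observe that if $u$ appears at any point strictly between $p$ and $q+1$, then splicing the $u$-block and the $v$-block together and applying the substitution lets us merge them into a single shorter block, again contradicting that $\sigma$ was shortest.

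The main obstacle I anticipate is making the ``merge/shorten'' argument fully rigorous in the presence of the intermediate size slot $k+1$: one must check that after substitution and deletion of a redundant token-pair, every intermediate set still has size in $\{k, k+1\}$ and that adjacent sets still differ by exactly one vertex, which requires care about the exact positions where adds and removes occur. I would handle this by always performing the substitution on a \emph{shortest} touching sequence and using minimality to rule out the bad configurations outright, rather than trying to repair them. Since the resulting sequence does not touch $v$ and has length at most $|\sigma|$ (in fact strictly less whenever the bad case arose), $v$ is irrelevant and the shortest $v$-avoiding sequence is no longer than the shortest $v$-touching one; that is precisely the definition of strongly irrelevant (Definition~\ref{def-irr}), and by symmetry the same holds for $u$ if we had chosen to keep $v$ instead. \qed
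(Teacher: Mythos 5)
Your core idea --- that $N_G[u]\cap C = N_G[v]\cap C$ makes $u$ and $v$ interchangeable for the purpose of dominating the core, so one can be substituted for the other throughout a reconfiguration sequence --- is exactly the idea behind the paper's proof. The execution differs, however, and the part you flag as your ``main obstacle'' is a genuine gap rather than a technicality. You perform the substitution block-by-block on a shortest $v$-touching sequence and then try to dispose of the bad case (an occurrence of $u$ strictly inside a $v$-block) by a minimality argument. That argument does not close: the ``shortened'' sequence you would use to contradict minimality is obtained by deleting the redundant add/remove pair for $u$, but deleting those two steps does not repair the sets \emph{between} them --- if $u$ and $v$ coexist in some $D_i$ with $|D_i|=k$, then $(D_i\setminus\{v\})\cup\{u\} = D_i\setminus\{v\}$ has size $k-1$ and is not a node of $R_{\textsc{ds}}(G,k,k+1)$, so the spliced sequence is invalid and yields no contradiction. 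Moreover, since your $\sigma$ is shortest only among $v$-touching sequences, exhibiting a shorter $v$-\emph{avoiding} sequence would not contradict its minimality in any case; it would only help you if that sequence were valid, which is precisely the point at issue.

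The paper avoids both the block decomposition and the appeal to minimality: it applies the substitution globally, setting $D_i' = D_i\cup\{v\}\setminus\{u\}$ whenever $u\in D_i$ and $D_i'=D_i$ otherwise (it eliminates $u$ and keeps $v$, the mirror image of your choice), and then deletes consecutive duplicates. Under the global swap, every addition or removal of one token while the other is already present turns two consecutive sets into \emph{equal} sets, so those steps simply disappear in the deduplication pass --- there is nothing to merge, no case analysis on where $u$ appears, and the resulting sequence automatically satisfies $|\sigma'|\le|\sigma|$, which is what strong irrelevance requires. I will note for fairness that the residual size issue you identified (both tokens lying in a size-$k$ set, so the swap yields a set of size $k-1$) is also glossed over in the paper, whose justification of the size condition only addresses the upper bound; so a fully rigorous write-up should still argue that such configurations can be assumed away or handled. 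But the block/minimality machinery in your draft should be replaced by the global substitution plus deduplication.
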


\begin{proof}
Given a reconfiguration sequence $\sigma = \langle D_0 = D_s, D_1, \ldots, D_\ell = D_t\rangle$
from $D_s$ to $D_t$ which touches $u$,
we will show how to obtain a reconfiguration sequence $\sigma'$ such that
$|\sigma'| \leq |\sigma|$ and $\sigma'$ touches $v$ but not $u$.

We construct $\sigma'$ in two stages.
In the first stage, we construct the sequence
$\alpha = \langle D'_0, D'_1, \ldots, D'_\ell \rangle$ of dominating sets,
where for all $0 \leq i \leq \ell$
\begin{align*}
   D'_i = \left\{
    \begin{array}{l l}
    \text{$D_i \cup \{v\} \setminus \{u\}$ if $u \in D_i$} \\
    \text{$D_i$ if $u \not\in D_i$.}
    \end{array} \right.
\end{align*}
Note that $\alpha$ is not necessarily a reconfiguration sequence
from $D_s$ to $D_t$. In the second stage, we repeatedly delete
from $\alpha$ any set $D'_i$ such that $D'_i = D'_{i + 1}$, $0 \leq i < \ell$.
We let $\sigma' = \langle D'_0, D'_1, \ldots, D'_{\ell'} \rangle$ denote the resulting sequence, in which there are
no two consecutive sets that are equal, and we claim
that $\sigma'$ is in fact a reconfiguration sequence from $D_s$ to $D_t$.

To prove the claim, we need to show that
the following conditions hold:
\begin{itemize}
\item[(1)] $D'_0 = D_s$ and $D'_{\ell'} = D_t$,
\item[(2)] $D'_i$ is a dominating set of $G$ for all $0 \leq i \leq \ell'$,
\item[(3)] $|D'_i \Delta D'_{i+1}| = 1$ for all $0 \leq i < \ell'$, and
\item[(4)] $k \leq |D'_i| \leq k + 1$ for all $0 \leq i \leq \ell'$.
\end{itemize}
Since $u,v \not\in D_s \cup D_t$, condition (1) clearly holds.
Moreover, since replacing $u$ by $v$ in any set does not increase
the size of the corresponding set, $k \leq |D'_i| \leq k + 1$ (condition (4) holds) and
$|D'_i \Delta D'_{i+1}| \leq 1$. As there are no two consecutive
sets in $\sigma'$ that are equal, $|D'_i \Delta D'_{i+1}| > 0$
and therefore $|D'_i \Delta D'_{i+1}| = 1$ (condition (3) holds).
The fact that $D'_i$ is a dominating set of $G$ follows from the
definition of a domination core.
Since $D_i$ is a dominating set of $G$, $C \subseteq N_G[D_i]$.
Moreover, since $N_G(u) \cap C = N_G(v) \cap C$ and $u,v \not\in C$,
we know that $C \subseteq N_G[D'_i]$.
By the definition of the domination core, it follows that $D'_i$ (which still dominates $C$)
is also a dominating set of $G$. Therefore, all four conditions hold, as needed.
\qed
\end{proof}

\begin{theorem}\label{theorem:ds-fpt}
{\sc DSR} parameterized by $k + d$ is fixed-parameter tractable on graphs that exclude $K_{d,d}$ as a subgraph.
\end{theorem}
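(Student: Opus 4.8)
The plan is to turn the statement into a polynomial-time kernelization followed by brute force, using exactly the three ingredients stated above: the bounded-size domination core (Theorem~\ref{theorem:domination-core}), the twinless-biclique bound (Theorem~\ref{theorem:twinless}), and the strongly-irrelevant-vertex rule (Lemma~\ref{lemma-ds-irr}). First observe that $D_s$ already witnesses that $G$ has a dominating set of size at most $k$, so Theorem~\ref{theorem:domination-core} applies and in $\Oh^*(dk^d)$ time we obtain a domination core $C$ of $G$ with $|C| \leq dk^d$.

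Next I would apply Lemma~\ref{lemma-ds-irr} as a reduction rule, exhaustively: while there are two vertices $u,v \in V(G)\setminus\{C \cup D_s \cup D_t\}$ with $N_G(u)\cap C = N_G(v)\cap C$, delete one of them (it is strongly irrelevant, so the instance stays equivalent, and since every deleted vertex lies outside $C$, the set $C$ remains a domination core of the reduced graph, so the rule can be reapplied). Each application removes one vertex, and an eligible pair can be located in polynomial time by bucketing the vertices of $V(G)\setminus\{C \cup D_s \cup D_t\}$ according to their neighborhoods in $C$ (each such neighborhood is a subset of the small set $C$); hence the whole reduction runs in polynomial time.

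The key step is bounding the reduced graph. Write $B = V(G)\setminus\{C \cup D_s \cup D_t\}$ and let $H$ be the bipartite graph with parts $(C,B)$ keeping exactly those edges of $G$ with one endpoint in $C$ and one in $B$. After the reduction no two vertices of $B$ have the same neighborhood in $C$, so $H$ is $B$-twinless; and $H$ is a subgraph of $G$, hence $K_{d,d}$-free. Theorem~\ref{theorem:twinless} then yields $|B| \leq 2(d-1)(|C|e/d)^{2d} \leq 2(d-1)(ek^d)^{2d}$, so $|V(G)| \leq |C| + 2k + |B| \leq dk^d + 2k + 2(d-1)(ek^d)^{2d}$, which depends only on $k$ and $d$. (At most one vertex of $B$ has empty neighborhood in $C$, so that degenerate case needs no separate handling.)

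Finally, on the reduced instance I would build $R_{{\textsc{ds}}}(G,k,k+1)$ explicitly—it has at most $|V(G)|^{k+1}$ nodes, and testing whether a set is a dominating set or whether two sets are adjacent is polynomial—and decide reachability of $D_t$ from $D_s$ by breadth-first search; the total running time is $f(k,d)\cdot\mathrm{poly}(n)$ on the original $n$-vertex input. I do not expect a genuine obstacle here, since all the heavy lifting is in the quoted theorems; the only points needing care are verifying that the bipartite graph fed to Theorem~\ref{theorem:twinless} is really $B$-twinless and $K_{d,d}$-free after the reduction (which is exactly why Lemma~\ref{lemma-ds-irr} is phrased via neighborhoods in $C$), and noting that because the removed vertices are \emph{strongly} irrelevant, shortest reconfiguration sequences are preserved—so the same argument also gives the $f(k,c)$ bound on the diameter of the components of $R_{{\textsc{ds}}}(G,k,k+1)$ advertised above and lets the algorithm output a shortest sequence.
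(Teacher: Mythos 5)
Your proposal is correct and follows essentially the same route as the paper: compute the domination core via Theorem~\ref{theorem:domination-core}, exhaustively remove strongly irrelevant vertices using Lemma~\ref{lemma-ds-irr}, bound the remaining vertices outside the core via the $B$-twinless bound of Theorem~\ref{theorem:twinless}, and finish by brute force on the kernel. The only (harmless) cosmetic difference is that you take the bipartition side $A$ to be all of $C$ rather than $C \setminus \{D_s \cup D_t\}$.
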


\begin{proof}
Given a graph $G$, integer $k$, and two dominating sets $D_s$ and $D_t$ of $G$
of size at most $k$, we first compute the
domination core $C$ of $G$, which by Theorem~\ref{theorem:domination-core} can
be accomplished in $\Oh^*(dk^d)$ time.
Next, and due to Lemma~\ref{lemma-ds-irr}, we can delete all strongly irrelevant vertices
from $V(G) \setminus \{C \cup D_s \cup D_t\}$. We denote this new graph by $G'$.

Now consider the bipartite graph $G''$ with
bipartition $(A = C \setminus \{D_s \cup D_t\}, B = V(G') \setminus \{C \cup D_s \cup D_t\})$.
This graph is $B$-twinless, since for every pair of vertices $u,v \in V(G) \setminus \{C \cup D_s \cup D_t\}$
such that $N_G(u) \cap C = N_G(v) \cap C$ either $u$ or $v$ is strongly irrelevant
and is therefore not in $V(G')$ nor $V(G'')$. Moreover, since every subgraph of a $K_{d,d}$-free graph
is also $K_{d,d}$-free, $G''$ is $K_{d,d}$-free.
Hence, by Theorems~\ref{theorem:domination-core} and~\ref{theorem:twinless}, we have
\begin{eqnarray*}
|B| & \leq & 2(d - 1) ({|A|e \over d})^{2d} \\
& \leq & 2d(3|A|)^{2d} \leq 2d(3dk^d)^{2d}.
\end{eqnarray*}

Putting it all together, we know that after deleting all strongly
irrelevant vertices, the number of vertices in the resulting graph $G'$ is at most
\begin{eqnarray*}
|V(G')| & = & |V(C)| + |D_s \cup D_t| + |V(G') \setminus \{C \cup D_s \cup D_t\}|\\
& \leq & dk^d + 2k + 2d(3dk^d)^{2d}
\end{eqnarray*}

Hence, we can solve {\sc DSR} by exhaustively enumerating
all $2^{|V(G')|}$ subsets of $V(G')$ and building the
reconfiguration graph $R_{{\textsc{ds}}}(G',k,k+1)$.
\qed
\end{proof}

\bibliographystyle{abbrv}
\bibliography{references}

\begin{thebibliography}{10}

\bibitem{ABFN00}
J.~Alber, H.~L. Bodlaender, H.~Fernau, and R.~Niedermeier.
\newblock Fixed parameter algorithms for planar dominating set and related
  problems.
\newblock In {\em Proceedings of the $17^{th}$ Scandinavian Workshop on
  Algorithm Theory}, pages 97--110. Springer Berlin Heidelberg, 2000.

\bibitem{AG09}
N.~Alon and S.~Gutner.
\newblock Linear time algorithms for finding a dominating set of fixed size in
  degenerated graphs.
\newblock {\em Algorithmica}, 54(4):544--556, 2009.

\bibitem{BB13}
M.~Bonamy and N.~Bousquet.
\newblock Recoloring bounded treewidth graphs.
\newblock {\em Electronic Notes in Discrete Mathematics}, 44:257--262, 2013.

\bibitem{B12}
P.~Bonsma.
\newblock The complexity of rerouting shortest paths.
\newblock In {\em Proceedings of the $37^{th}$ International Symposium on
  Mathematical Foundations of Computer Science}, pages 222--233, 2012.

\bibitem{Bon12FSTTCS}
P.~Bonsma.
\newblock Rerouting shortest paths in planar graphs.
\newblock In {\em Proceedings of the $32^{nd}$ Annual Conference on Foundations
  of Software Technology and Theoretical Computer Science}, pages 337--349,
  2012.

\bibitem{BC09}
P.~Bonsma and L.~Cereceda.
\newblock Finding paths between graph colourings: {PSPACE}-completeness and
  superpolynomial distances.
\newblock {\em Theoretical Computer Science}, 410(50):5215--5226, 2009.

\bibitem{BMNR14}
P.~Bonsma, A.~E. Mouawad, N.~Nishimura, and V.~Raman.
\newblock The complexity of bounded length graph recoloring and {CSP}
  reconfiguration.
\newblock In {\em Proceedings of the $9^{th}$ International Symposium on
  Parameterized and Exact Computation}, pages 110--121, 2014.

\bibitem{CLASSES}
A.~Brandst\"{a}dt, V.~B. Le, and J.~P. Spinrad.
\newblock {\em Graph Classes: A Survey}.
\newblock Society for Industrial and Applied Mathematics, Philadelphia, PA,
  USA, 1999.

\bibitem{Cer07}
L.~Cereceda.
\newblock {\em Mixing graph colourings}.
\newblock PhD thesis, London School of Economics, 2007.

\bibitem{CHJ09}
L.~Cereceda, J.~van~den Heuvel, and M.~Johnson.
\newblock Mixing 3-colourings in bipartite graphs.
\newblock {\em European Journal of Combinatorics}, 30(7):1593--1606, 2009.

\bibitem{CHJ11}
L.~Cereceda, J.~van~den Heuvel, and M.~Johnson.
\newblock Finding paths between 3-colorings.
\newblock {\em Journal of Graph Theory}, 67(1):69--82, 2011.

\bibitem{DAWAR07}
A.~Dawar.
\newblock Finite model theory on tame classes of structures.
\newblock In {\em Proceedings of the $32^{nd}$ International Symposium on
  Mathematical Foundations of Computer Science}, volume 4708 of {\em Lecture
  Notes in Computer Science}, pages 2--12. Springer Berlin Heidelberg, 2007.

\bibitem{DAWAR10}
A.~Dawar.
\newblock Homomorphism preservation on quasi-wide classes.
\newblock {\em Journal of Computer and System Sciences}, 76(5):324--332, 2010.
\newblock Workshop on Logic, Language, Information and Computation.

\bibitem{DK13}
A.~Dawar and S.~Kreutzer.
\newblock Domination problems in nowhere-dense classes of graphs.
\newblock {\em CoRR}, 2009.
\newblock arXiv:0907.42837.

\bibitem{DFHT05}
E.~D. Demaine, F.~V. Fomin, M.~Hajiaghayi, and D.~M. Thilikos.
\newblock Subexponential parameterized algorithms on bounded-genus graphs and
  {H}-minor-free graphs.
\newblock {\em Journal of the ACM}, 52(6):866--893, 2005.

\bibitem{D05}
R.~Diestel.
\newblock {\em Graph theory}.
\newblock Springer-Verlag, Electronic Edition, 2005.

\bibitem{DF97}
R.~G. Downey and M.~R. Fellows.
\newblock {\em Parameterized complexity}.
\newblock Springer-Verlag, New York, 1997.

\bibitem{DDFKLPPRSVS14}
P.~G. Drange, M.~S. Dregi, F.~V. Fomin, S.~Kreutzer, D.~Lokshtanov,
  M.~Pilipczuk, M.~Pilipczuk, F.~Reidl, S.~Saurabh, F.~S. Villaamil, and
  S.~Sikdar.
\newblock Kernelization and sparseness: the case of dominating set.
\newblock {\em CoRR}, 2014.
\newblock arXiv:1411.4575.

\bibitem{EFF02}
J.~Ellis, H.~Fan, and M.~Fellows.
\newblock The dominating set problem is fixed parameter tractable for graphs of
  bounded genus.
\newblock In {\em Proceedings of the $19^{th}$ Scandinavian Workshop on
  Algorithm Theory}, volume 2368 of {\em Lecture Notes in Computer Science},
  pages 180--189. Springer Berlin Heidelberg, 2002.

\bibitem{ER60}
P.~Erdos and R.~Rado.
\newblock Intersection theorems for systems of sets.
\newblock {\em Journal of the London Mathematical Society}, 35:85--90, 1960.

\bibitem{FG06}
J.~Flum and M.~Grohe.
\newblock {\em Parameterized complexity theory}.
\newblock Springer-Verlag, Berlin, 2006.

\bibitem{GKMP09}
P.~Gopalan, P.~G. Kolaitis, E.~N. Maneva, and C.~H. Papadimitriou.
\newblock The connectivity of {B}oolean satisfiability: computational and
  structural dichotomies.
\newblock {\em SIAM Journal on Computing}, 38(6):2330--2355, 2009.

\bibitem{GKS13}
M.~Grohe, S.~Kreutzer, and S.~Siebertz.
\newblock Characterisations of nowhere dense graphs.
\newblock In {\em Proceedings of the $33^{rd}$ Annual Conference on Foundations
  of Software Technology and Theoretical Computer Science}, pages 21--40, 2013.

\bibitem{GKS14}
M.~Grohe, S.~Kreutzer, and S.~Siebertz.
\newblock Deciding first-order properties of nowhere dense graphs.
\newblock In {\em Proceedings of the $46^{th}$ Annual ACM Symposium on Theory
  of Computing}, pages 89--98, 2014.

\bibitem{IDHPSUU11}
T.~Ito, E.~D. Demaine, N.~J.~A. Harvey, C.~H. Papadimitriou, M.~Sideri,
  R.~Uehara, and Y.~Uno.
\newblock On the complexity of reconfiguration problems.
\newblock {\em Theoretical Computer Science}, 412(12-14):1054--1065, 2011.

\bibitem{IKD12}
T.~Ito, M.~Kami\'{n}ski, and E.~D. Demaine.
\newblock Reconfiguration of list edge-colorings in a graph.
\newblock {\em Discrete Applied Mathematics}, 160(15):2199--2207, 2012.

\bibitem{IKO14}
T.~Ito, M.~Kamiński, and H.~Ono.
\newblock Fixed-parameter tractability of token jumping on planar graphs.
\newblock In {\em Algorithms and Computation}, Lecture Notes in Computer
  Science, pages 208--219. Springer International Publishing, 2014.

\bibitem{IKOSUY14}
T.~Ito, M.~Kamiński, H.~Ono, A.~Suzuki, R.~Uehara, and K.~Yamanaka.
\newblock On the parameterized complexity for token jumping on graphs.
\newblock In {\em Proceedings of the $11^{th}$ Annual Conference on Theory and
  Applications of Models of Computation}, volume 8402 of {\em Lecture Notes in
  Computer Science}, pages 341--351. Springer International Publishing, 2014.

\bibitem{JKKPP14}
M.~Johnson, D.~Kratsch, S.~Kratsch, V.~Patel, and D.~Paulusma.
\newblock Finding shortest paths between graph colourings.
\newblock {\em CoRR}, 2014.
\newblock arXiv:1403.6347.

\bibitem{KMM11}
M.~Kami\'{n}ski, P.~Medvedev, and M.~Milani\v{c}.
\newblock Shortest paths between shortest paths.
\newblock {\em Theoretical Computer Science}, 412(39):5205--5210, 2011.

\bibitem{KMM12}
M.~Kami\'{n}ski, P.~Medvedev, and M.~Milani\v{c}.
\newblock Complexity of independent set reconfigurability problems.
\newblock {\em Theoretical Computer Science}, 439:9--15, 2012.

\bibitem{LW70}
D.~R. Lick and A.~T. White.
\newblock k-degenerate graphs.
\newblock {\em Canadian Journal of Mathematics}, 22:1082--1096, 1970.

\bibitem{MNR14}
A.~E. Mouawad, N.~Nishimura, and V.~Raman.
\newblock Vertex cover reconfiguration and beyond.
\newblock {\em CoRR}, 2014.
\newblock arXiv:1402.4926.

\bibitem{MNRSS13}
A.~E. Mouawad, N.~Nishimura, V.~Raman, N.~Simjour, and A.~Suzuki.
\newblock On the parameterized complexity of reconfiguration problems.
\newblock In {\em Proceedings of the $8^{th}$ International Symposium on
  Parameterized and Exact Computation}, 2013.

\bibitem{MNRW14}
A.~E. Mouawad, N.~Nishimura, V.~Raman, and M.~Wrochna.
\newblock Reconfiguration over tree decompositions.
\newblock {\em CoRR}, 2014.
\newblock arXiv:1405.2447.

\bibitem{NO08}
J.~Nesetril and P.~O. de~Mendez.
\newblock Structural properties of sparse graphs.
\newblock In {\em Building Bridges}, volume~19 of {\em Bolyai Society
  Mathematical Studies}, pages 369--426. Springer Berlin Heidelberg, 2008.

\bibitem{NM10}
J.~Nesetril and P.~O. de~Mendez.
\newblock First order properties on nowhere dense structures.
\newblock {\em Journal of Symbolic Logic}, 75(3):868--887, 2010.

\bibitem{NO10}
J.~Nesetril and P.~O. de~Mendez.
\newblock From sparse graphs to nowhere dense structures: Decompositions,
  independence, dualities and limits, 2010.
\newblock European Congress of Mathematics.

\bibitem{N06}
R.~Niedermeier.
\newblock {\em Invitation to fixed-parameter algorithms}.
\newblock Oxford University Press, Oxford, 2006.

\bibitem{PRS09}
G.~Philip, V.~Raman, and S.~Sikdar.
\newblock Solving dominating set in larger classes of graphs: {FPT} algorithms
  and polynomial kernels.
\newblock In {\em Proceedings of the $17^{th}$ Annual European Symposium on
  Algorithms}, volume 5757 of {\em Lecture Notes in Computer Science}, pages
  694--705. Springer Berlin Heidelberg, 2009.

\bibitem{SSPRIVATE14}
S.~Saurabh.
\newblock Private communications, 2014.

\bibitem{TV12}
J.~A. Telle and Y.~Villanger.
\newblock {FPT} algorithms for domination in biclique-free graphs.
\newblock In {\em Proceedings of the $20^{th}$ Annual European Conference on
  Algorithms}, pages 802--812, 2012.

\bibitem{H13}
J.~van~den Heuvel.
\newblock The complexity of change.
\newblock {\em Surveys in Combinatorics 2013}, 409:127--160, 2013.

\bibitem{WROCHNA14}
M.~Wrochna.
\newblock Reconfiguration in bounded bandwidth and treedepth.
\newblock {\em CoRR}, 2014.
\newblock arXiv:1405.0847.

\end{thebibliography}

\end{document}